\documentclass[journal]{IEEEtran}
\usepackage{cite}
\usepackage{amsmath,amssymb,amsfonts,amsthm}
\usepackage{algorithmic}
\usepackage{subcaption}
\usepackage{graphicx}
\usepackage{textcomp}
\usepackage{xcolor}
\usepackage{enumitem}
\usepackage{tabularx,array}

\newtheorem{proposition}{Proposition}
\usepackage{lipsum}
\usepackage{flushend}
\usepackage{booktabs,makecell,multirow}

\newif\ifshowchanges
\showchangesfalse

\definecolor{rewblue}{RGB}{0,0,200}

\ifshowchanges
  \newcommand{\rew}[1]{\textcolor{rewblue}{#1}}
\else
  \newcommand{\rew}[1]{#1}
\fi

\title{Near-field Boundary Distance in mmWave and THz Communications with Misaligned Antenna Arrays}

\author{Peng Zhang,~\IEEEmembership{Graduate Student Member,~IEEE,} Vitaly Petrov,~\IEEEmembership{Member,~IEEE,}
        and~Emil Bj{\"o}rnson,~\IEEEmembership{Fellow,~IEEE}\vspace{-5mm}
\thanks{Peng Zhang, Vitaly Petrov, and Emil Bj{\"o}rnson are with KTH Royal Institute of Technology, Stockholm, Sweden. This work has been supported in part by Digital Futures at KTH, Grant 2022–04222 from the Swedish Research Council, Vinnova SweWIN Grant 2023–00572, and the Swedish Foundation for Strategic Research FFL–9 program. A shorter version of this work has been presented at the IEEE GLOBECOM 2025~\cite{zhang2025impact}.}}

\begin{document}
\maketitle

\begin{abstract}
Wireless communications in the millimeter wave (mmWave) and terahertz (THz) spectrum allow harnessing large frequency bands, thus achieving ultra-high data rates. However, the inherently short wavelengths of mmWave and THz signals lead to an extended radiative near-field region, where certain canonical far-field assumptions fail. Most prior works aimed to characterize this radiative near-field region either do not consider antenna arrays on \emph{both} communicating nodes or, if they do, assume perfect alignment between the arrays. However, such assumptions break down in many realistic deployments, where both sides must employ large-scale mmWave/THz antenna arrays to maintain the desired communication range, while perfect antenna alignment cannot be guaranteed particularly under nodes mobility. In this work, a generalized mathematical framework is presented to characterize the radiative near-field distance in directional mmWave and THz communication systems \emph{under various realistic array rotations and misalignments}. With the use of the developed framework, \emph{compact closed-form expressions} are derived for the near-field boundary distance in a wide range of antenna configurations, including array-to-array and array-to-point setups, considering both linear and planar arrays. Our numerical study reveals that the presence of antenna misalignment may significantly adjust the boundaries of the near-field region in mmWave and THz communication systems.
\end{abstract}

\begin{IEEEkeywords}
Terahertz communications, millimeter wave, near field, spherical wave propagation, directional antennas
\end{IEEEkeywords}

\section{Introduction}
Despite a moderate commercial penetration of millimeter wave (mmWave, $30$\,GHz--$300$\,GHz) communications in 5G networks~\cite{narayanan2022comparative}, both mmWave and even wider Terahertz frequency bands (THz, $300$\,GHz to $3$\,THz) are actively explored for ultra-high-throughput wireless communications beyond 5G~\cite{akyildiz2022terahertz,shafie2022terahertz,jiang2024terahertz}. However, extremely short wavelengths of mmWave and, especially, THz signals together with the envisioned deployment of cm-scale antenna arrays to maintain sufficient communication range, significantly expand the so-called ``radiative near-field region"~\cite{cui2022near}. This region may extend to several tens or even hundreds of meters~\cite{proc_ieee_jornet}, where conventional far-field approximations (such as the plane-wave assumption~\cite{pizzo2022fourier}) are not accurate~\cite{balanis2015antenna,singh2023wavefront}. Furthermore, even a slight antenna misalignment may lead to substantial variations in the actual ``\textit{near-field distance}" -- \textit{the distance \rew{beyond} which far-field approximations become sufficiently accurate}~\cite{monemi2024study}. Such variations further complicate the design and performance prediction. Consequently, accurate determination of the boundary between the radiative near-field and far-field regions is essential for the efficient design and optimization of the mmWave and THz communication systems of tomorrow.

\subsection{Prior Work on Near-field Distance Characterization}
The characterization of the radiative near-field distance in wireless communications has traditionally been addressed from the perspective of phase differences~\cite{balanis2015antenna,selvan2017fraunhofer}. The canonical boundary separating the radiative near-field from the far-field is defined by the Fraunhofer criterion, which defines the near-field distance as the threshold separation distance between the transmitter and the receiver antennas, where the maximal phase difference between the components of the signal received by different parts of the receiver antenna or antenna array remains below $\pi/8$~\cite{balanis2015antenna,selvan2017fraunhofer}. Initial derivations considered only a simple point-source transmitter (Tx) to a fixed-aperture receiver (Rx) scenario~\cite{stutzman2012antenna}. Recently, several studies (e.g., \cite{lu2023near,petrov2023near,renwang2025applicable}, among other relevant works) extended this analysis to more complex configurations involving uniform linear arrays (ULAs) and uniform planar arrays (UPAs).

Beyond phase-based characterizations, near-field boundaries have also been defined via received-power uniformity, multiplexing/capacity metrics, and beamfocusing-related criteria; see, e.g., \cite{lu2021does,lu2021communicating,cui2024near,bohagen2009spherical,wang2014tens,jiang2005spherical,bjornson2021primer,zeng2025revisiting,daei2025near} and the surveys \cite{liu2025near,liu2024near}.

\subsection{Near-field Distance with Misaligned Antennas}
In summary, multiple prior studies in the field present useful insights on the near-field distance characterization for different threshold criteria applied and antenna configurations. However, the overwhelming majority of such studies (including but not limited to~\cite{lu2023near,bjornson2021primer,bohagen2009spherical}), have one key simplistic assumption in common: \emph{they all explicitly or implicitly assume perfect alignment between the Tx and the Rx antenna systems}. Meanwhile, in many practical deployments (e.g., those involving mobile users), array misalignment is unavoidable due to node mobility or rotations. Furthermore, any misalignment between the Tx and Rx antennas notably affects the phase-related characteristics at the Rx~\cite{jayasankar2024impact}, hence challenging the applicability of the near-field distance formulations above that only consider a perfect antenna alignment~setup.

\emph{To the best of the authors' knowledge, only a few recent works have partially addressed angular variability or off-boresight antenna configurations.} Among those,~\cite{lu2021communicating} introduces a direction-dependent Rayleigh surface to capture angular effects, but the closed-form expressions are only valid for special cases, and, importantly, the array is only modeled at the base station side, while a mobile user is modeled as a point-source node. \rew{Similarly,~\cite{monemi2024revisiting} and its extended version~\cite{monemi2024study} examine the Fraunhofer distance in off-boresight scenarios, yet the analysis is restricted to simplified ULA-to-point configurations. In these works, the off-boresight boundary is defined via a worst-case element-to-reference (or inter-element) phase spread across the aperture, where the reference point may depend on the look direction. Along a related line,~\cite{hu2023design} studies near-field beamforming for large intelligent surfaces under a surface-to-point setup and adopts a similar phase-spread based criterion to characterize the near-field regime.} From different perspectives,~\cite{renwang2025applicable} derives applicable regions of spherical- and plane-wave models from channel-rank and gain perspectives, but the analysis again reduces to array-to-point cases. Likewise,~\cite{li2025near} investigates arc-shaped ultra-large arrays and characterizes their direction-dependent Rayleigh and uniform-power distances, yet the user side is still modeled as a single-antenna point source. Finally, our preliminary work on this topic~\cite{zhang2025impact} only studies the possible rotations of the antenna array on one of the communicating nodes.

\subsection{Our Novelty and Contribution}
Following the related work summary above, there exists a major gap in the literature: \emph{how to characterize the near-field distance in mmWave/THz communications using ULA or UPA systems \textbf{under realistic antenna array misalignment}}. Such a boundary provides a fundamental basis for determining the separation distance region, where far-field assumptions remain valid. Understanding the behavior and dynamics of the near-field distance under practical misalignment is also important for developing efficient engineering solutions for next-generation mmWave/THz systems (e.g., beamforming, channel estimation, and alignment strategies, among others).

To address the discussed gap, in this paper, we develop a generalized framework for characterizing the near-field distance under misalignment in array-to-array wireless links. The main contributions are summarized as follows:

\begin{itemize}
    \item \rew{Following the ongoing discussions in the community on the accuracy of using the canonical $\pi/8$ maximal phase deviation as a near-field to far-field threshold~\cite{stutzman2012antenna,balanis2015antenna,selvan2017fraunhofer}, \emph{we establish a generalized phase-error-based boundary criterion} with a tunable threshold $\varphi \in (0,\pi]$ (including $\varphi=\pi/8$ as a special case). Under this criterion, \emph{we develop a unified geometry-driven general approach} that characterizes the effective phase of each Tx--Rx component by jointly accounting for the propagation phase and the Tx/Rx beamsteering (phase compensation) terms under misalignment.}

    \item \rew{Building on the above generalized criterion and geometry-driven framework, \emph{we develop misalignment-aware closed-form radiative near-field distance expressions} for multiple canonical configurations, including array-to-array links (ULA-to-ULA and UPA-to-UPA) and representative array-to-point special cases. Moreover, to further illustrate the extensibility of the proposed approach beyond same-type arrays, we also provide \emph{representative mixed UPA-to-ULA examples} (e.g., the no-rotation case and the case with a single rotation angle), which demonstrate that the same geometry-driven analysis can be readily extended to arbitrary array-type combinations and rotation models by specifying the corresponding element coordinates and orientations.}
        
    \item \emph{We validate the developed expressions via extensive numerical simulations} and provide a quantitative analysis of the impact of misalignment on near-field boundaries, including the deformation of dominance regions and the scaling behavior with respect to array~aperture.
    
    \item \rew{\emph{We show that array misalignment can induce non-negligible deviations in the near-field boundary distance relative to conventional no-rotation expressions (see, e.g., Figs.~\ref{fig:diff_RxSize} and~\ref{fig:nearfield_distribution_alpha}).} Such deviations can be substantial in both relative and absolute terms (e.g., several tens of meters in practical mmWave/THz setups), and may therefore change the near-/far-field regime classification and the corresponding signal model/processing design.}
\end{itemize}

The rest of the paper is organized as follows. Section~\ref{sec:system_model} introduces the system model. Section~\ref{sec:ULA} presents the near-field distance derivation for misaligned ULA-to-ULA links. Section~\ref{sec:UPA} extends the analysis to UPA-based configurations. Section~\ref{sec:results} provides numerical results that validate the theoretical expressions and illustrate the impact of misalignment. Section~\ref{sec:conclusion} concludes the paper. Key notation is in Table~\ref{tab:notation}.

\begin{table}[!b]
\centering
\caption{List of Notations}
\label{tab:notation}
\renewcommand{\arraystretch}{1.2}
\begin{tabular}{>{\centering\arraybackslash}m{1.2cm}|m{6.5cm}}
\hline\hline
\textbf{Notation} & \textbf{Description}\\
\hline
\multicolumn{2}{c}{\textit{\textbf{Rotation-related parameters}}} \\
\hline
$\theta$ & Vertical rotation angle of the Tx \\
\hline
$\phi$ & Horizontal rotation angle of the Tx \\
\hline
$\alpha$ & Vertical off-boresight angle \\
\hline
\multicolumn{2}{c}{\textit{\textbf{Antenna-related parameters}}} \\
\hline
$\lambda$ & Wavelength of the carrier signal \\
\hline
$D_1, D_2$ & Size of the ULA or side length of the UPA at Tx/Rx \\
\hline
$N_1, N_2$ & Number of antennas in the Tx/Rx ULA, or side length of the Tx/Rx UPA, respectively \\
\hline
$r$ & Distance between the centers of the Tx and Rx arrays \\
\hline
\multicolumn{2}{c}{\textit{\textbf{ULA parameters}}} \\
\hline
$d_n$ & Displacement of the $n$-th element from the center of either Tx or Rx array \\
\hline
$n_1$, $n_2$ & Indices of elements on the TxULA and RxULA \\
\hline
$r_{n_2,n_1}$ & Distance between the $n_1$-th Tx element and the $n_2$-th Rx element \\
\hline
$r'_{n_2,n_1}$ & ULA effective distance \\
\hline
\multicolumn{2}{c}{\textit{\textbf{UPA parameters}}} \\
\hline
$m$, $n$, $i$, $j$ & Indices of elements on the UPA: $i$, $j$ for the TxUPA (horizontal/vertical), and $m$, $n$ for the RxUPA \\
\hline
$\mathbf{d}_{m,n}$ & Position vector of the $(m,n)$-th element on the RxUPA \\
\hline
$\mathbf{d}_{i,j}$ & Position vector of the $(i,j)$-th element on the TxUPA \\
\hline
$r^{i,j}_{m,n}$ & Distance between the $(m,n)$-th element on the RxUPA and the $(i,j)$-th element on the TxUPA \\
\hline
$r'^{i,j}_{m,n}$ & UPA effective distance \\
\hline\hline
\end{tabular}
\end{table}

\begin{figure}[!t]
    \centering
    \includegraphics[width=0.9\linewidth]{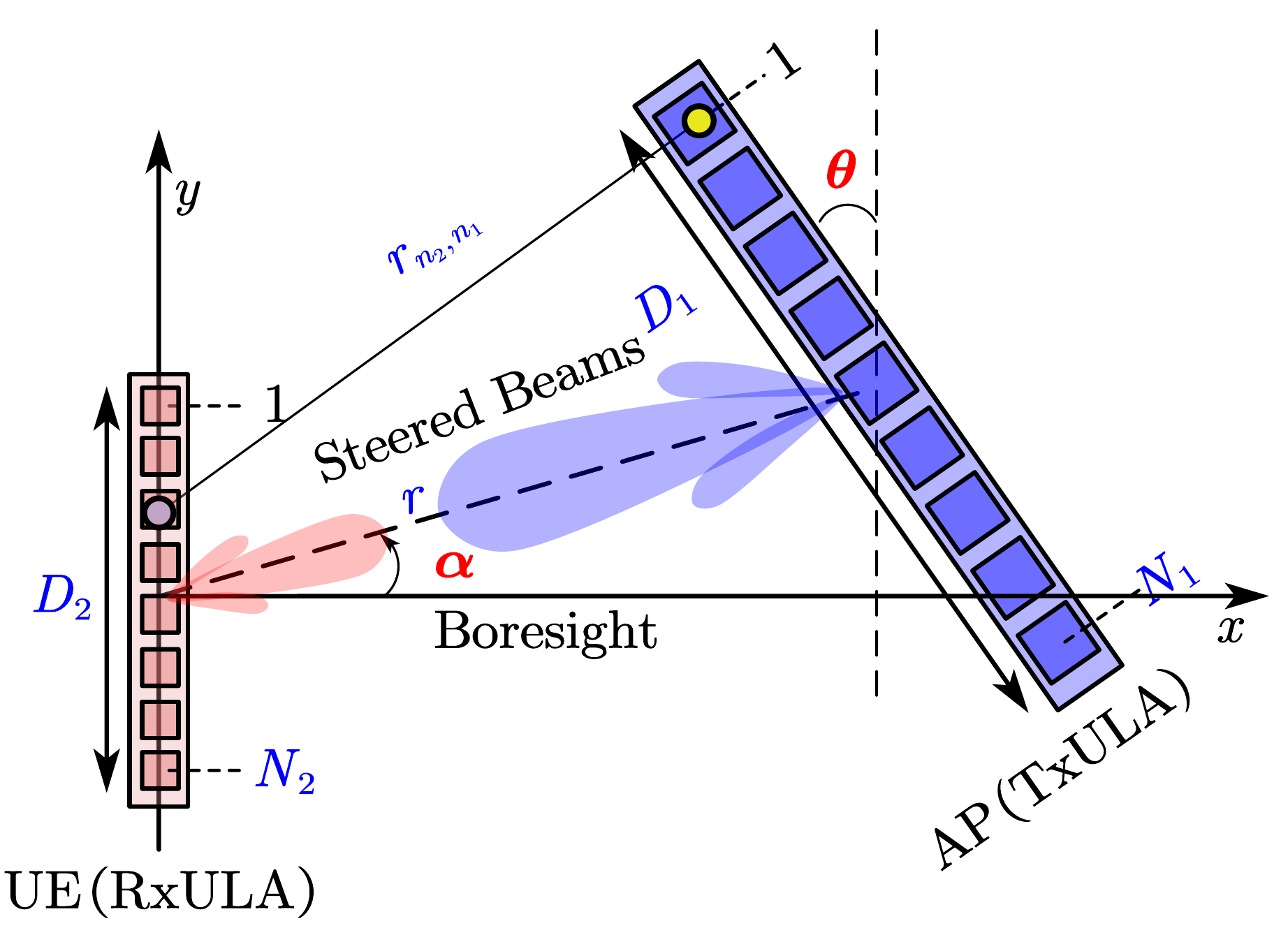}
    \caption{Considered ULA-to-ULA scenario (analyzed in Sec.~\ref{sec:ULA}). Section~\ref{sec:ULA-to-ULA(on-boresight)} covers the on-boresight case ($\alpha = 0$), while Sec.~\ref{sec:ULA-to-ULA(off-boresight)} -- the off-boresight case ($\alpha \neq 0$); Sec.~\ref{sec:ULA-to-point} explores a special ULA-to-point setup ($N_{1} = 1$).}
    \label{fig:ULA-to-ULA_system_model(off-boresight)}
\end{figure}

\begin{figure}[!t]
    \centering
    \includegraphics[width=0.9\linewidth]{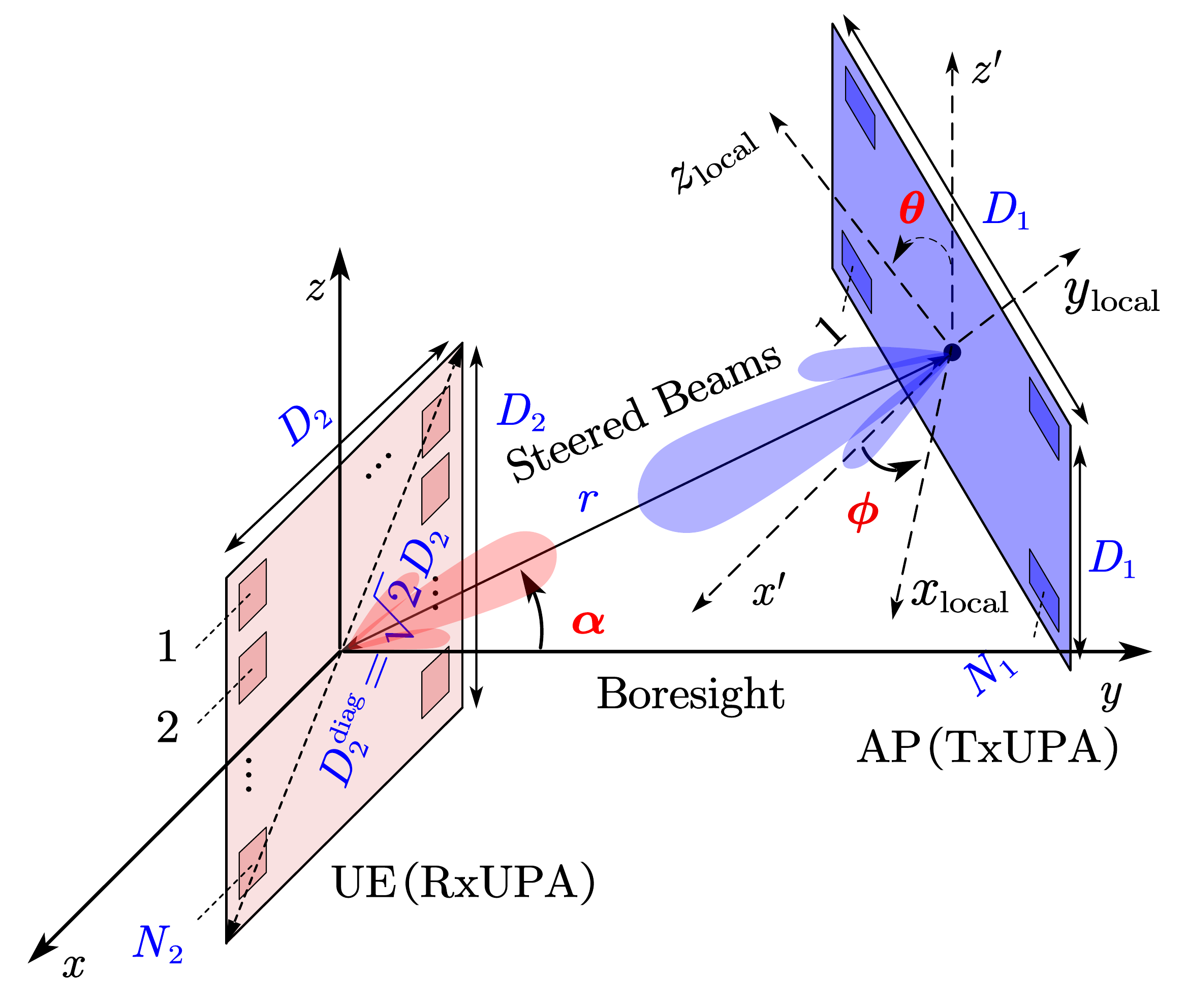}
    \caption{Considered UPA-to-UPA scenario (analyzed in Sec.~\ref{sec:UPA}). Section~\ref{sec:UPA-to-UPA(on-boresight)} covers the on-boresight case ($\alpha = 0$), Sec.~\ref{sec:UPA-to-UPA(off-boresight,1)} discusses the off-boresight case ($\alpha \neq 0$) with the Tx rotation in only one plane ($\phi = 0$, $\theta \neq 0$), while Sec.~\ref{sec:UPA-to-UPA(off-boresight,2)} -- the off-boresight case ($\alpha \neq 0$) with the Tx rotation in two planes ($\phi \neq 0$, $\theta \neq 0$); Sec.~\ref{sec:UPA-to-point} explores a special UPA-to-point setup ($N_{1} = 1$).}
    \label{fig:UPA-to-UPA_system_model(off-boresight,two_angles)}
\end{figure}

\section{System Model}\label{sec:system_model}
For clarity, we describe a downlink communication system, where the access point (AP) serves as the Tx and the user equipment (UE) is the Rx. Notably, the final expressions and results are also applicable to corresponding uplink scenarios (UE Tx and AP Rx) due to electromagnetic reciprocity~\cite{neiman2006principle}.
\subsection{ULA-to-ULA Case}
We first analyze an off-boresight system comprising a ULA-equipped AP (referred to as TxULA) and a ULA-equipped UE (referred to as RxULA), as in Fig.~\ref{fig:ULA-to-ULA_system_model(off-boresight)}. \rew{Both ULA arrays are assumed to lie in the same plane and the array misalignment is restricted to in-plane rotation.} The TxULA consists of $N_1$ antenna elements, while the RxULA contains $N_2$ elements, both with inter-element spacing of $\lambda/2$. Accordingly, the physical aperture lengths of the TxULA and RxULA are denoted by $D_1$ and $D_2$, respectively, with $D_1=(N_1-1)\lambda/2$ and $D_2=(N_2-1)\lambda/2$. \rew{The separation distance between the center of the TxULA and the center of the RxULA is set to $r$ meters. To avoid degenerate extremely-close geometries (e.g., overlapping apertures) we assume $r \geq (D_1+D_2)/2$.}

There are two key antenna rotation angles to consider in such a setup (see Fig.~\ref{fig:ULA-to-ULA_system_model(off-boresight)}). The first angle is an off-boresight angle $\alpha \in [-\pi/2, \pi/2]$, which denotes the angle between the line connecting the centers of the two arrays and the positive $x$-axis. A positive $\alpha$ indicates that the TxULA is positioned in the upper half-plane relative to the RxULA, while the negative $\alpha$ puts the TxULA to the lower half-plane. The second angle is the additional TxULA rotation angle $\theta$ around the $z$-axis. \rew{Without loss of generality, the Tx is assumed to be located on the positive $x$-axis. The case where the Tx is located on the negative $x$-axis is fully covered by the proposed analysis due to the inherent symmetry. In particular, it is geometrically equivalent to a mirror reflection of the considered configuration with respect to the $y$-axis (which can be represented by the angle mapping $\alpha \mapsto \pi-\alpha$ and $\theta \mapsto -\theta$).}

\subsection{UPA-to-UPA Case}
We then analyze the UPA-to-UPA configuration, as in Fig.~\ref{fig:UPA-to-UPA_system_model(off-boresight,two_angles)}, where both the AP and UE are equipped with UPAs. The Tx and Rx nodes are referred to as TxUPA and RxUPA, respectively. Similar to the ULA case, both arrays adopt an inter-element spacing of $\lambda/2$ in each dimension. The RxUPA consists of $N_2 \times N_2$ antenna elements and is fixed in the $xz$-plane, with its boresight aligned along the positive $y$-axis. The TxUPA, comprising $N_1 \times N_1$ elements, is initially positioned along the $y$-axis at a distance $r$ from the~UE. \rew{To avoid degenerate extremely-close geometries, we similarly assume $r\geq (D_1+D_2)/\sqrt{2} $, where $D_1$ and $D_2$ denote the side lengths of the TxUPA and RxUPA apertures, respectively, with $D_1=(N_1-1)\lambda/2$ and $D_2=(N_2-1)\lambda/2$.} In this configuration, the line connecting the centers of the TxUPA and RxUPA forms an angle $\alpha \in [-\pi/2, \pi/2]$ with respect to the $y$-axis, referred to as the off-boresight angle. A positive $\alpha$ indicates that the TxUPA is located above the boresight plane (i.e., in the upper $xy$ half-space) relative to the RxUPA. 

\rew{To model realistic misalignments due to small-scale motion and general orientation~\cite{petrov_small_scale}, the TxUPA is allowed to undergo a composite 3D rotation from its reference alignment.
To define the rotations, we introduce a translated global coordinate system $(x',y,z')$ whose origin is at the TxUPA center and whose axes are parallel to the global axes $(x,y,z)$ (see Fig.~\ref{fig:UPA-to-UPA_system_model(off-boresight,two_angles)}).
The TxUPA orientation is then specified by two angles: a vertical rotation $\theta$ around the $x'$-axis and a horizontal rotation $\phi$ around the $z'$-axis, following the \textit{right-hand rule}\footnote{\rew{The misalignment angles in this paper are treated as input parameters describing the relative Tx/Rx geometry and orientation. Depending on the deployment, they may be known from the installation geometry, available/estimated as side information (e.g., from device orientation sensing or alignment procedures), or uncertain. In the latter case, the same closed-form expressions can be used for statistical/robust analysis by modeling the angles as random variables, as illustrated in Section~V-D.}}.
The resulting rotated axes define the local coordinate system $(x_{\mathrm{local}},y_{\mathrm{local}},z_{\mathrm{local}})$ attached to the TxUPA.}

\section{ULA Near-Field Distance Calculation}\label{sec:ULA}
In this section, we derive the near-field boundary distance with antennas misalignment \emph{for various ULA configurations}. First, Sec.~\ref{sec:ULA-to-ULA(general)} summarizes the general approach (partially utilized also for more complex UPA-to-UPA scenarios in Sec.~\ref{sec:UPA} below). Then, Sec.~\ref{sec:ULA-to-ULA(on-boresight)} discusses the simplest on-boresight case ($\alpha=0$) with only a single Tx rotation. Later, Sec.~\ref{sec:ULA-to-ULA(off-boresight)} extends the analysis to off-boresight ULA displacements ($\alpha \neq 0$). Finally, Sec.~\ref{sec:ULA-to-point} addresses a specific extreme ULA-to-point setup with off-boresight misalignment.

\subsection{General Approach} \label{sec:ULA-to-ULA(general)}
\rew{To characterize the near-field region for array-to-array links, we adopt a generalized phase-error-based criterion with a maximum allowable phase deviation threshold $\varphi \in (0,\pi]$. The key idea is that the link can be treated as far-field only if \emph{all} received signal components associated with different Tx--Rx element pairs can be \emph{simultaneously phase-aligned} by conventional far-field beamsteering (phase pre-/post-compensation). Accordingly, for a given center-to-center distance $r$, we require that the \emph{maximum residual phase difference} among all received signal components \emph{after accounting for the Tx/Rx phase compensations} does not exceed $\varphi$. We then define the radiative near-field boundary distance $r_{\mathrm F}$ as the smallest $r$ such that this condition holds for all $r \ge r_{\mathrm F}$\footnote{\rew{While this work is motivated by mmWave/THz deployments, the subsequent phase-based characterization and the derived expressions below are not band-specific and, without the loss of generality, may be utilized for other array-based wireless communication systems exploiting other frequency bands (e.g., microwave).}}.}

To achieve this, we accurately derive \emph{the phase delay of each of the received signal components as a combination of three phase adjustments applied to it in such an array-to-array scenario}. The first adjustment is caused by the path the transmitted signal travels between individual Tx and Rx antenna array elements. The second adjustment is applied by the Tx to signal components transmitted by each Tx array element when performing beamsteering\footnote{In this work, beamsteering refers to the phase pre-compensation applied across the antenna elements to steer the main lobe of the transmitted (or received) wavefront toward a desired direction. Note that since our study focuses on determining the near-field boundary distance (starting from which all the far-field solutions, including beamsteering, become fully applicable), we \textbf{may} safely utilize beamsteering-driven phase adjustment in this study.} to direct the transmitted signal toward the Rx in off-boresight setups ($\alpha \neq 0$). The third phase adjustment comes from the Rx performing its own beamsteering to compensate for its own misalignment from the Tx antenna center ($\theta \neq \alpha$).

\subsection{ULA-to-ULA case (on-boresight)} \label{sec:ULA-to-ULA(on-boresight)}
\begin{figure}[!t]
    \centering
    \includegraphics[width=0.8\linewidth]{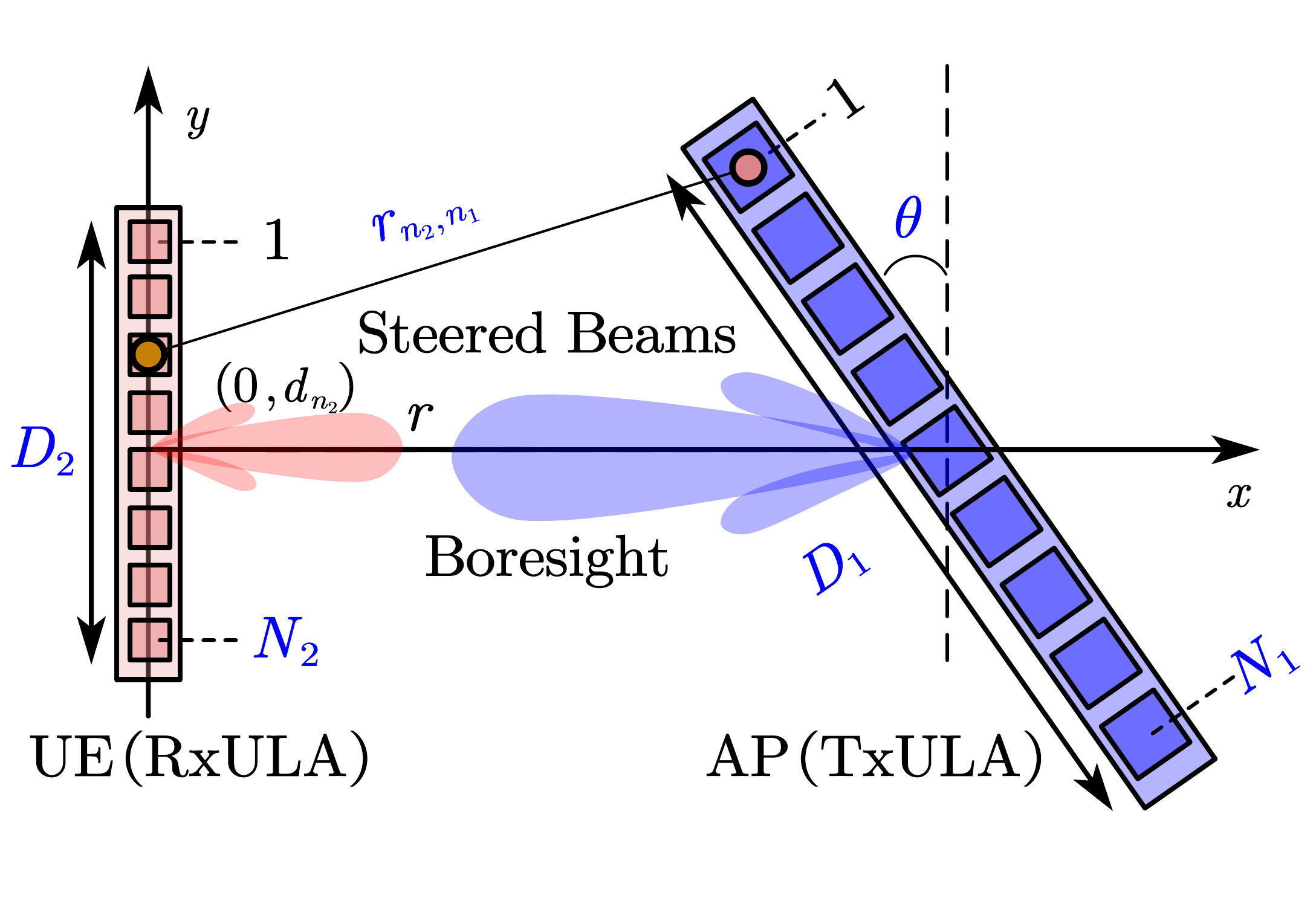}
    \caption{On-boresight ULA scenario ($\alpha = 0$, $\theta \neq 0$, analyzed in Sec.~\ref{sec:ULA-to-ULA(on-boresight)}).}
    \label{fig:ULA-to-ULA_system_model}
\end{figure}

Consider the on-boresight ULA-to-ULA (L2L) scenario in Fig.~\ref{fig:ULA-to-ULA_system_model}, where the TxULA is rotated by an angle $\theta \in [-\pi/2, \pi/2]$. Recall that, while the distance between the centers of the Tx and Rx arrays is set to $r$ (see Fig.~\ref{fig:ULA-to-ULA_system_model}), the exact distances between \emph{individual Tx's and Rx's antenna array elements} are different and depend on the Tx antenna rotation angle $\theta$. Specifically, let $r_{n_2,n_1}$ denote the physical distance between the $n_1$-th Tx element and the $n_2$-th Rx element, where $n_1 = 1, \dots, N_1$ and $n_2 = 1, \dots, N_2$. According to geometric free-space propagation~\cite{sherman1962properties}, the propagation phase between these two array elements, denoted by $\phi_{n_2,n_1}$, is
\begin{equation}\label{equ:system model}
\phi_{n_2,n_1} = - 2\pi r_{n_2,n_1} / \lambda.
\end{equation}

In addition, to steer the beam toward the center of the Rx, the Tx applies a pre-steering phase to the signal transmitted from each Tx antenna element. For $n_1$-th Tx antenna element, this phase delay can be derived as~\cite{balanis2015antenna}
\begin{equation}
\label{equ:compensation}
\tilde{\phi}_{n_1}=- 2\pi d_{n_1}\sin\theta /\lambda,
\end{equation}
where $d_{n_1} \in [-D_1/2, D_1/2]$ is the distance between the $n_1$-th Tx element and the Tx array center. Hence, the total adjusted phase delay between the $n_1$-th element of the Tx array and the $n_2$-th element of the Rx array becomes
\begin{equation}
\label{equ:modified phi}
    \tilde{\phi}_{n_2,n_1}=\phi_{n_2,n_1} + \tilde{\phi}_{n_1}= -{2\pi{r}'_{n_2,n_1}}/{\lambda} ,
\end{equation}
where $r'_{n_2,n_1} \triangleq r_{n_2,n_1} + d_{n_1} \sin\theta$ denotes the so-called \textit{effective distance}, i.e., the equivalent free-space propagation distance after applying the pre-steering phase in~\eqref{equ:compensation}.

We now notice that the near-field criterion (maximum phase discrepancy between the received components not exceeding $\varphi$) corresponds to a maximum variation in effective distance of $\lambda\varphi/{2\pi}$. Therefore, the near-field distance $r = r_{\text{F,on}}^{\text{L2L}}(\theta)$ is defined as the minimum Tx–Rx separation that satisfies
\begin{equation}\label{equ:define near-field}
    r=r_{\text{F,on}}^{\text{L2L}}(\theta),\quad \mathrm{s}.\mathrm{t}. \max_{n_2,n_1} {r}'_{n_2,n_1}-\min_{n_2,n_1} {r}'_{n_2,n_1}= \frac{\lambda\varphi}{2\pi}.
\end{equation}

From the scenario geometry in Fig.~\ref{fig:ULA-to-ULA_system_model}, we obtain
\begin{equation}
r_{{n_2},{n_1}}=\sqrt{\left( r-d_{n_1}\sin \theta \right) ^2+\left( d_{n_1}\cos \theta -d_{n_2} \right) ^2},
\end{equation}
where $d_{n_2} \in [-D_2/2, D_2/2]$ is the distance between the $n_2$-th Rx antenna array element and the center of the Rx array.

By applying the third-order Taylor approximation $\sqrt{1 + x} \approx 1 + \frac{1}{2}x - \frac{1}{8}x^2 + \frac{1}{16}x^{3}$, we obtain the approximation
\begin{equation}\label{equ:U2U_onboresight_taylor}
    r_{n_2,n_1} = r-d_{n_1}\sin \theta +\frac{\left( d_{n_1}\cos \theta -d_{n_2} \right) ^2}{2r}+o\left( \frac{1}{r^2} \right),
\end{equation}
where $ o(\cdot) $ denotes the standard Landau little-o notation. Therefore, the second term (i.e., the first-order or dominant term in $1/r$) becomes the dominant contributor to the residual approximation error. Hence, the effective distance $r'_{n_2,n_1}$ can be well approximated as
\begin{equation}
    r'_{n_2,n_1} \approx r +\frac{\left( d_{n_1}\cos \theta -d_{n_2} \right) ^2}{2r}.
\end{equation}

We then observe from the approximation that the minimum value of $r'_{n_2,n_1}$ is attained when both $d_{n_1} = 0$ and $d_{n_2} = 0$ (i.e., when the Tx and Rx elements are located at their respective array centers), so $\displaystyle\min_{n_2,n_1} {r}'_{n_2,n_1}= r$.

Hence, when enforcing that the effective distance variation remains smaller than $\lambda\varphi/{2\pi}$, as in (\ref{equ:define near-field}), the near-field condition for the on-boresight L2L configuration becomes
\begin{equation}
    \max_{n_2,n_1} \frac{(d_{n_1} \cos\theta - d_{n_2})^2}{2r} \leq \frac{\lambda\varphi}{2\pi}.
\end{equation}

This inequality holds for $d_{n_1} \in [-D_1/2, D_1/2]$ and $d_{n_2} \in [-D_2/2, D_2/2]$. Notably, the maximum value of the left-hand side is achieved when $d_{n_1} = D_1/2$ and $d_{n_2} = -D_2/2$. Substituting these into the inequality, \textbf{the corresponding L2L on-boresight near-field distance is given as}
\begin{equation}
    r_{\text{F,on}}^{\text{L2L}}(\theta)=\frac{\pi(D_1\cos{\theta}+D_2)^2}{4\lambda\varphi}.
\label{equ:U-to-ULA_on_boresight}
\end{equation}

\rew{\textbf{Analysis:} The closed-form boundary in \eqref{equ:U-to-ULA_on_boresight} follows a Fraunhofer-type scaling with an \emph{effective projected aperture} $D_{\mathrm{eff}}(\theta)=D_1\cos\theta + D_2$, i.e., $r_{\mathrm{F,on}}^{\mathrm{L2L}}(\theta)\propto D_{\mathrm{eff}}^2/(\lambda\varphi)$. Hence, the in-plane Tx rotation primarily reduces the projected Tx aperture through $\cos\theta$, which typically \emph{decreases} the near--far field boundary as $|\theta|$ increases, while the dependence is symmetric with respect to $\theta=0$ since $r_{\mathrm{F,on}}^{\mathrm{L2L}}(\theta)=r_{\mathrm{F,on}}^{\mathrm{L2L}}(-\theta)$. When $\varphi=\pi/8$, \eqref{equ:U-to-ULA_on_boresight} reduces to the canonical Fraunhofer criterion~\cite{balanis2015antenna}, i.e., $r_{\mathrm{F,on}}^{\mathrm{L2L}}(\theta)|_{\varphi=\frac{\pi}{8}} = 2(D_1\cos\theta + D_2)^2/\lambda$. Moreover, when $\theta=0$, it further simplifies to $2(D_1 + D_2)^2/\lambda$, which coincides with the no-rotation L2L result in~\cite{petrov2023near}.}

\subsection{ULA-to-ULA case (off-boresight)}\label{sec:ULA-to-ULA(off-boresight)}
We now extend the analysis to the off-boresight scenario illustrated in Fig.~\ref{fig:ULA-to-ULA_system_model(off-boresight)}. The main difference lies in the geometry: specifically, the center of TxULA is no longer aligned with the RxULA boresight, which introduces an additional off-boresight angle $\alpha \in [-\pi/2, \pi/2]$. Due to the inherent symmetry of the array geometry, the TxULA rotation angle $\theta$ can be restricted to the interval $\theta \in \alpha+[-\pi/2, \pi/2]$, which fully captures all distinct configurations without loss of generality. 

\begin{figure}[!t]
    \centering
    \includegraphics[width=0.9\linewidth]{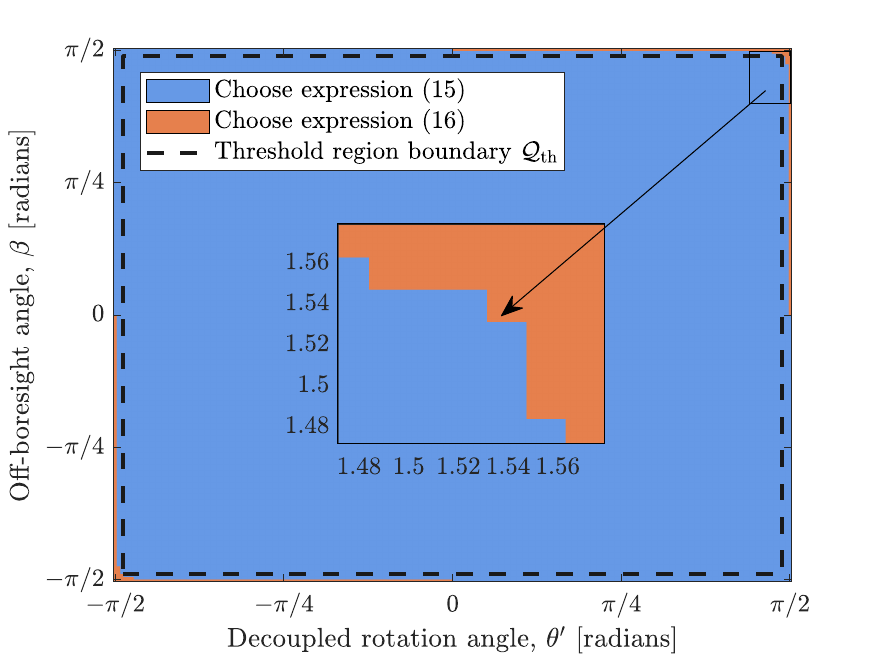}
    \caption{Selection region of expressions~\eqref{equ:U2U_off_a} and~\eqref{equ:U2U_off_b} under different angle pairs $(\theta', \alpha)$, along with the threshold region boundary $\mathcal{Q}_{\mathrm{th}}$ by Proposition~\ref{prop:dominant_a}.}
    \label{fig:U2U_threshold_region}
\end{figure}

To steer the beam toward the RxULA, the TxULA should apply a pre-steering phase $\tilde{\phi}_{n_1}=-{2\pi} d_{n_1}\sin(\theta-\alpha)/{\lambda}$ to antenna elements $n_1$. Compared to the on-boresight case, the off-boresight scenario requires additional pre-steering phase at the Rx array to align the incoming signals. Specifically, each Rx element applies the correction $\tilde{\phi}_{n_2}=-{2\pi} d_{n_2}\sin\alpha/{\lambda}.$

Therefore, the total adjusted phase between the $n_1$-th element of the Tx and the $n_2$-th element of the Rx becomes $\tilde{\phi}_{n_2,n_1}=\phi_{n_2,n_1} + \tilde{\phi}_{n_1} + \tilde{\phi}_{n_2}= -{2\pi} {r}'_{n_2,n_1}/{\lambda}$, where the effective distance ${r}'_{n_2,n_1}$ is given by
\begin{equation}\label{equ:U2U_offboresight_r}
    {r}'_{n_2,n_1} = r_{n_2,n_1} +d_{n_2}\sin\alpha + d_{n_1}\sin(\theta-\alpha).
\end{equation}

According to the geometric relationship, the distance between the $n_1$-th antenna element of the Tx and the $n_2$-th antenna element of the Rx, denoted as $r_{n_2,n_1}$, can be obtained in~\eqref{equ:r_n2.n_1_ULA-to-ULA(off-boresight)} at the top of the next page.
\begin{figure*}[!t]
\begin{equation}\label{equ:r_n2.n_1_ULA-to-ULA(off-boresight)}
       r_{n_2,n_1}=\sqrt{\left( r-d_{n_1}\sin \left( \theta -\alpha \right) -d_{n_2}\sin \alpha \right) ^2+\left( d_{n_1}\cos \left( \theta -\alpha \right) -d_{n_2}\cos \alpha \right) ^2}.
\end{equation}
\end{figure*}

Combining \eqref{equ:U2U_offboresight_r} and \eqref{equ:r_n2.n_1_ULA-to-ULA(off-boresight)}, it follows that $r'_{n_2,n_1} \geq r$ for all $(n_1, n_2)$, which implies that $\min_{n_2,n_1} r'_{n_2,n_1} = r$. Therefore, by denoting $r = r_{\text{F,off}}^{\text{L2L}}(\theta,\alpha)$ as the near-field distance in the off-boresight scenario, and substituting \eqref{equ:r_n2.n_1_ULA-to-ULA(off-boresight)} into the near-field criterion defined in~\eqref{equ:define near-field}, we obtain the expression for $r_{\text{F,off}}^{\text{L2L}}(\theta,\alpha)$ in~\eqref{equ:U2U_off_max_problme}, shown at the top of the next page. 
\begin{figure*}[!t]
\begin{equation}\label{equ:U2U_off_max_problme}
       r_{\text{F,off}}^{\text{L2L}}(\theta,\alpha)=\max_{n_2,n_1} \bigg[\underset{\triangleq f(d_{n_2},d_{n_1})}{\underbrace{ \frac{\pi \left( d_{n_1}\cos \left( \theta -\alpha \right) -d_{n_2}\cos \alpha\right) ^2}{\lambda \varphi}+d_{n_2}\sin \alpha +d_{n_1}\sin\mathrm{(}\theta -\alpha )}} \bigg] -\frac{\lambda \varphi}{4\pi}.
\end{equation}
\hrulefill
\end{figure*}

The expression inside the maximization in~\eqref{equ:U2U_off_max_problme} can be viewed as a convex function $f(d_{n_2}, d_{n_1})$ over the compact convex set $[{-D_1}/{2}, {D_1}/{2}] \times [{-D_2}/{2}, {D_2}/{2}]$. By the extreme value theorem for convex functions, the maximum is attained at one of the vertices of this rectangular region. Therefore, ignoring the wavelength-related term ${\lambda \varphi}/{4\pi}$ in~\eqref{equ:U2U_off_max_problme}, which is upper-bounded by ${\lambda}/{4}$ due to $\varphi \in (0,\pi]$, \textbf{the corresponding L2L off-boresight near-field distance is given as}

\begin{equation}\label{equ:L2L_exact}
    r_{\text{F,off}}^{\text{L2L}}(\theta,\alpha) = \max \left( r_{(\text{a})}^{\text{L2L}}(\theta,\alpha),\, r_{(\text{b})}^{\text{L2L}} (\theta,\alpha)\right),
\end{equation}
where
\begin{equation}\label{equ:U2U_off_a}
    \begin{split}
        r_{(\text{a})}^{\text{L2L}}(\theta,\alpha)&=\frac{\pi\left( D_1\cos (\theta-\alpha)+D_2\cos \alpha \right) ^2}{4\lambda\varphi}\\
        &+\left|{D_1\sin (\theta-\alpha) -D_2\sin \alpha }\right|/2
    \end{split}
\end{equation}
and
\begin{equation}\label{equ:U2U_off_b}
   \begin{split}
       r_{(\text{b})}^{\text{L2L}}(\theta,\alpha)&=\frac{\pi \left( D_1\cos (\theta-\alpha) -D_2\cos \alpha \right) ^2}{4\lambda \varphi}\\
       &+\left|{D_1\sin (\theta-\alpha) +D_2\sin \alpha}\right|/2.
   \end{split}
\end{equation}

\begin{proposition}
\label{prop:dominant_a}
The off-boresight near-field distance is dominated by case~(a), i.e., $r_{\mathrm{F,off}}^{\mathrm{L2L}}(\theta,\alpha)=r_{(\mathrm{a})}^{\mathrm{L2L}}(\theta,\alpha)$, provided that the angles $\theta'$ and $\alpha$ satisfy $|\theta'|,\, |\alpha| < \vartheta_{\mathrm{th}},$ where $\theta'\triangleq\theta-\alpha$ and the threshold angle $\vartheta_{\mathrm{th}}$ is defined as
\begin{equation}
    \vartheta_{\mathrm{th}} = \sin^{-1}\left( \frac{-1+\sqrt{1+4\kappa^2}}{2\kappa} \right)
\end{equation}
with $\kappa = {\pi\max \left( D_1, D_2 \right) }/{(\lambda \varphi)}$.
\end{proposition}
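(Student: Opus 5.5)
The plan is to compare the two candidates in \eqref{equ:L2L_exact} directly and show that $r_{(\mathrm{a})}^{\mathrm{L2L}}-r_{(\mathrm{b})}^{\mathrm{L2L}}\ge 0$ on the stated angular region. Write $c_1=\cos\theta'$, $s_1=\sin\theta'$, $c_2=\cos\alpha$, $s_2=\sin\alpha$, with $\theta'=\theta-\alpha$. Subtracting \eqref{equ:U2U_off_b} from \eqref{equ:U2U_off_a}, the quadratic terms combine through $(u+v)^2-(u-v)^2=4uv$. This gives
\begin{equation*}
r_{(\mathrm{a})}^{\mathrm{L2L}}-r_{(\mathrm{b})}^{\mathrm{L2L}}=\frac{\pi D_1D_2c_1c_2}{\lambda\varphi}+\frac{|D_1s_1-D_2s_2|-|D_1s_1+D_2s_2|}{2}.
\end{equation*}
Since $|\theta'|,|\alpha|<\vartheta_{\mathrm{th}}\le\pi/2$, we have $c_1,c_2>0$, so the first term is positive.

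Next I bound the sign-indefinite absolute-value term. For real $a,b$, the elementary inequality $|a-b|-|a+b|\ge -2\min(|a|,|b|)$ holds. It is verified by taking $|a|\ge|b|$ without loss of generality and noting that the left side equals $\mp 2|b|$ according to the sign of $ab$. Hence
\begin{equation*}
r_{(\mathrm{a})}^{\mathrm{L2L}}-r_{(\mathrm{b})}^{\mathrm{L2L}}\ge \frac{\pi D_1D_2c_1c_2}{\lambda\varphi}-\min\big(D_1|s_1|,D_2|s_2|\big).
\end{equation*}

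Now split by which aperture is larger. If $D_1\ge D_2$, then $\kappa=\pi D_1/(\lambda\varphi)$; bounding the minimum by $D_2|s_2|$ and dividing by $D_2$, it suffices to show $\kappa c_1c_2\ge|s_2|$. If $D_2>D_1$, bounding the minimum by $D_1|s_1|$ reduces the claim to $\kappa c_1c_2\ge |s_1|$ with $\kappa=\pi D_2/(\lambda\varphi)$.

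In both cases the monotonicity of $\cos$ and $|\sin|$ on $[0,\pi/2]$ gives $c_1c_2>\cos^2\vartheta_{\mathrm{th}}$ and $|s_i|<\sin\vartheta_{\mathrm{th}}$. It therefore suffices that $\kappa(1-s^2)\ge s$ with $s=\sin\vartheta_{\mathrm{th}}$, which is equivalent to $\kappa s^2+s-\kappa\le 0$. The positive root of this quadratic is exactly $(-1+\sqrt{1+4\kappa^2})/(2\kappa)$, which lies in $(0,1)$, so the definition of $\vartheta_{\mathrm{th}}$ makes the inequality hold, with equality only at the boundary. This yields $r_{(\mathrm{a})}^{\mathrm{L2L}}\ge r_{(\mathrm{b})}^{\mathrm{L2L}}$, and hence $r_{\mathrm{F,off}}^{\mathrm{L2L}}=r_{(\mathrm{a})}^{\mathrm{L2L}}$.

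The main obstacle is handling the absolute values without a case explosion over the signs of $D_1s_1\pm D_2s_2$. The $\min$ inequality resolves this. The one point that needs care is choosing which term of the $\min$ to keep, namely the one carrying the smaller aperture. That choice cancels the smaller $D_i$ and leaves $\kappa$ defined through $\max(D_1,D_2)$, exactly as in the statement.
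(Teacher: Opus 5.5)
Your proof is correct. Its algebraic core matches the paper's: the $4uv$ cancellation of the quadratic terms, the $|a-b|-|a+b|$ identity with $a=D_1\sin\theta'$ and $b=D_2\sin\alpha$, and the quadratic $\kappa s^2+s-\kappa$.

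The routes differ in how the absolute values and the angular region are handled. The paper uses the exact identity $|a-b|-|a+b|=-2\operatorname{sgn}(ab)\min(|a|,|b|)$ and splits by sign.
\begin{itemize}
\item For opposite signs ($\theta'\alpha<0$ with $|\theta'|,|\alpha|<\pi/2$), it concludes that case~(a) dominates with no further threshold.
\item For equal signs, it extracts $\vartheta_{\mathrm{th}}$ only by evaluating the condition on the diagonal $\theta'=\alpha=\vartheta$. It leaves implicit why that diagonal, in fact its corner, is the worst point of the square $(-\vartheta_{\mathrm{th}},\vartheta_{\mathrm{th}})^2$.
\end{itemize}

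Your one-sided bound $|a-b|-|a+b|\ge-2\min(|a|,|b|)$ handles both sign patterns at once. In exchange, you lose the paper's stronger opposite-sign conclusion.

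Your monotonicity step, $\cos\theta'\cos\alpha>\cos^2\vartheta_{\mathrm{th}}$ and $|\sin\theta'|,|\sin\alpha|<\sin\vartheta_{\mathrm{th}}$, combined with keeping the smaller-aperture entry of the $\min$, is exactly what is needed to pass from the diagonal to the whole square. It also shows transparently why $\kappa$ involves $\max(D_1,D_2)$.

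So your argument is the more complete proof of the proposition as stated. The paper's sign split adds information about where case~(a) dominates outside $\mathcal{Q}_{\mathrm{th}}$.
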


\begin{IEEEproof}
The proof is provided in Appendix~\ref{appendix:proof_dominant_a}.
\end{IEEEproof}

It is worth noting that the threshold angle $\vartheta_{\mathrm{th}}$ in Proposition~\ref{prop:dominant_a} depends on the parameter $\kappa = {\pi \max(D_1, D_2)}/{(\lambda \varphi)}$. In practical settings, this quantity is typically large, since the array aperture $\max(D_1, D_2)$ is much greater than the product of the wavelength $\lambda$ and the phase deviation parameter $\varphi$. As a result, the value of $\vartheta_{\mathrm{th}}$ approaches $\pi/2$, which implies that the theoretical threshold region $\mathcal{Q}_{\mathrm{th}} \triangleq (-\vartheta_{\mathrm{th}}, \vartheta_{\mathrm{th}})^2$ spans almost the entire angular domain.

To validate the above theoretical insight, the practical relevance of Proposition~\ref{prop:dominant_a} is illustrated in  Fig.~\ref{fig:U2U_threshold_region}, where the dominating regions of expressions~\eqref{equ:U2U_off_a} and~\eqref{equ:U2U_off_b} are illustrated for varying angle pairs $(\theta', \alpha)$. The dashed boundary corresponds to the theoretical threshold region $\mathcal{Q}_{\mathrm{th}} $. The simulation is conducted with parameters $D_1 = 0.1\,\mathrm{m}$, $D_2 = 0.05\,\mathrm{m}$, $\varphi = \pi/8$, and carrier frequency $f = 300\,\mathrm{GHz}$, yielding $\lambda = c/f \approx 10^{-3}\,\mathrm{m}$ and $\kappa = {\pi \cdot 0.1}/({10^{-3} \cdot \pi/8}) = 800$. Substituting these values into the expression of $\vartheta_{\mathrm{th}}$ in Proposition~\ref{prop:dominant_a} gives $\vartheta_{\mathrm{th}} \approx \sin^{-1}(0.998) \approx 85.9^\circ$, and thus $\mathcal{Q}_{\mathrm{th}} \approx (-85.9^\circ, 85.9^\circ)^2$. This indicates that case~(a) dominates within nearly the entire angular domain, as confirmed in the figure, where only a narrow corner region lies outside $\mathcal{Q}_{\mathrm{th}}$ and requires the use of expression~\eqref{equ:U2U_off_b}.

While Fig.~\ref{fig:U2U_threshold_region} confirms the angular dominance of case~(a) from a theoretical perspective, a more detailed numerical result is presented in Fig.~\ref{fig:chooseExp} (see Sec.~\ref{sec:results}). This result verifies that expression~\eqref{equ:U2U_off_a} alone suffices to accurately characterize the near-field distance across the entire angular domain of interest. Furthermore, we note that the second term in \eqref{equ:U2U_off_a}, which accounts for the aperture-induced phase variation, is upper bounded by $(D_1 + D_2)/2$ and thus remains at the aperture-size level. Since this contribution is generally negligible in practical scenarios, \textbf{the near-field distance can be further approximated as}
\begin{equation}
    \tilde{r}_{\text{F,off}}^{\text{L2L}}(\theta,\alpha)\approx\frac{\pi(D_1\cos(\theta-\alpha)+D_2\cos{\alpha})^2}{4\lambda\varphi}.
\label{equ:ULA-to-ULA_off_boresight}
\end{equation}

\rew{\textbf{Analysis:} The approximation \eqref{equ:ULA-to-ULA_off_boresight} reveals that the dominant off-boresight L2L boundary is governed by the \emph{effective projected aperture} $D_1\cos(\theta-\alpha)+D_2\cos\alpha$. This makes the role of misalignment transparent: the boundary depends on the \emph{relative} Tx rotation mismatch $\theta'=\theta-\alpha$ (rather than $\theta$ alone), so aligning the Tx array orientation with the link direction (i.e., $\theta\approx\alpha$) maximizes the projected Tx aperture and can significantly change near-field distance compared to a no-rotation baseline. When $\varphi=\pi/8$, \eqref{equ:ULA-to-ULA_off_boresight} reduces to the canonical Fraunhofer criterion~\cite{balanis2015antenna}, i.e., $\tilde{r}_{\text{F,off}}^{\text{L2L}}(\theta,\alpha)|_{\varphi=\frac{\pi}{8}} = 2(D_1\cos(\theta-\alpha) + D_2\cos \alpha)^2/\lambda$.}

\subsection{ULA-to-point case (off-boresight)}\label{sec:ULA-to-point}
\begin{figure}[!t]
    \centering
    \includegraphics[width=0.6\linewidth]{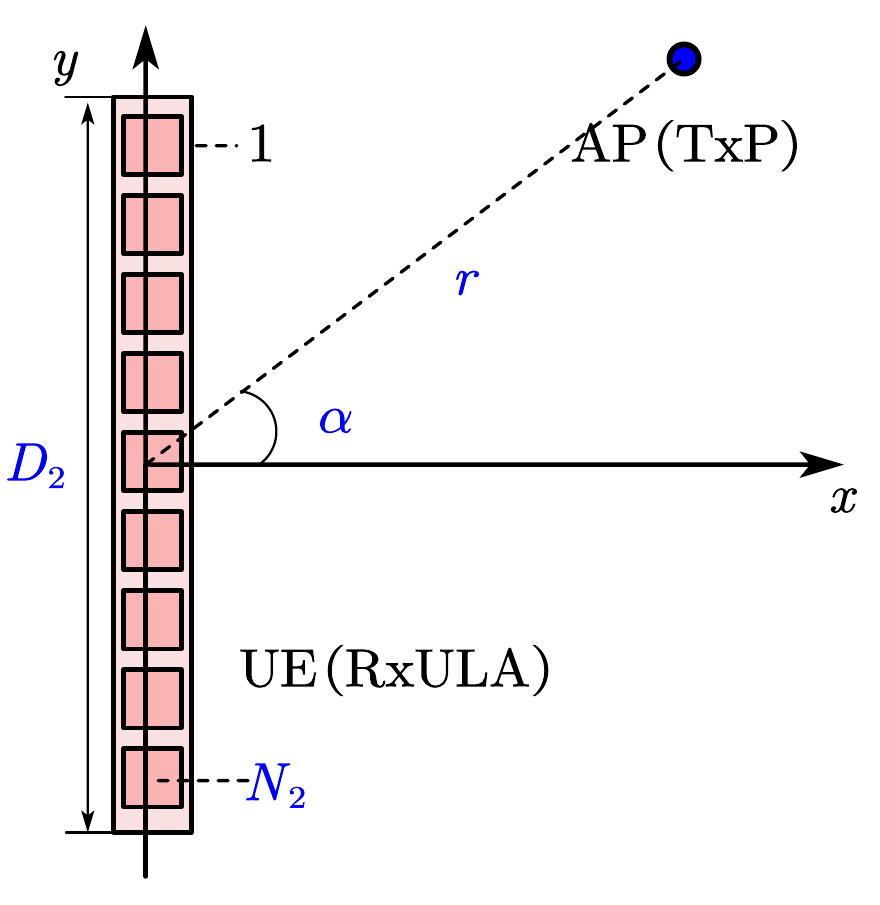}
    \caption{Off-boresight ULA-to-point system model (Sec.~\ref{sec:ULA-to-point}).}
    \label{fig:ULA-to-point_system_model}
\end{figure}

We next consider a special case of the off-boresight configuration, namely the ULA-to-point scenario, as in Fig.~\ref{fig:ULA-to-point_system_model}. In this case, the TxULA is reduced to a single point source (TxP, $N_{1} = 1$). The distance between the $n_2$-th antenna element of the RxULA and the TxP is given by
\begin{equation}\label{equ:U2P_off_r_n}
    r_{n_2} = \sqrt{\left( r-d_{n_2}\sin \alpha \right) ^2+d_{n_2}^{2}\cos ^2\alpha} \text{,}
\end{equation}
where $d_{n_2} \in [-D_2/2, D_2/2]$ denotes the position of the $n_2$-th antenna element relative to the center of the array. 

To coherently combine the incoming wavefront from the point source, each receive antenna element applies a pre-steering phase given by
\begin{equation}
    \tilde{\phi}_{n_2} = -{2\pi} d_{n_2} \sin\alpha/{\lambda}.
\end{equation}

Therefore, the total phase delay can be written as
\begin{equation}
    {\phi}'_{n_2} = \phi_{n_2} + \tilde{\phi}_{n_2} = -{2\pi} r'_{n_2}/{\lambda},
\end{equation}
where $r'_{n_2} = r_{n_2} + d_{n_2} \sin\alpha$ denotes the \textit{effective distance} between the ${n_2}$-th antenna element and the point source.

\rew{Following the near-field criterion in~\eqref{equ:define near-field}, we define the off-boresight near-field distance $r_{\mathrm{F,off}}^{\mathrm{L2O}}(\alpha)$ as}
\begin{equation}\label{equ:def_U2P_off}
\rew{r=r_{\mathrm{F,off}}^{\mathrm{L2O}}(\alpha),
\quad \text{s.t.}\quad
\max_{{n_2}} r'_{n_2}
-\min_{{n_2}} r'_{n_2}
\le \frac{\lambda\varphi}{2\pi}.}
\end{equation}

Combining this expression with the geometric distance in~\eqref{equ:U2P_off_r_n}, it follows that $r'_{n_2} \geq r$ for all $d_{n_2} \in [-D_2/2, D_2/2]$, with equality attained at $d_{n_2} = 0$. Hence, $\min_{n_2} r'_{n_2} = r$. Substituting this into the near-field criterion defined in~\eqref{equ:define near-field}, and denoting $r = r_{\text{F,off}}^{\mathrm{L2O}}(\alpha)$, a straightforward algebraic manipulation yields
\begin{equation}\label{equ:U2P_r_beta_init}
    r_{\text{F,off}}^{\mathrm{L2O}}(\alpha)= \max_{{n_2}} \left( \frac{\pi d_{n_2}^2 \cos^2\alpha}{\lambda \varphi} - d_{n_2} \sin\alpha \right).
\end{equation}

Note that the maximum in~\eqref{equ:U2P_r_beta_init} is attained at $d_{n_2} = -\operatorname{sgn}(\sin\alpha){D}_2/{2}$, where $\operatorname{sgn}(\cdot)$ denotes the sign function. Therefore, \textbf{the near-field distance for the off-boresight ULA-to-point configuration is given by}
\begin{equation}\label{equ:L2O_off_closed}
    r_{\text{F,off}}^{\mathrm{L2O}}(\alpha) = \frac{\pi D_2^2 \cos^2\alpha}{4\lambda \varphi} + \frac{D_2}{2} |\sin\alpha| .
\end{equation}

We again note that the second term is upper bounded by ${D_2}/{2}$, corresponding to the aperture size of the Rx array. As this contribution is generally negligible in practical scenarios, \textbf{ the near-field distance thus becomes}
\begin{equation}
    \tilde{r}_{\text{F,off}}^{\mathrm{L2O}}(\alpha) \approx \frac{\pi D_2^2 \cos^2\alpha}{4\lambda \varphi}.
\end{equation}

\rew{\textbf{Analysis:} The ULA-to-point near-field boundary admits a simple interpretation: the dominant term in $\tilde{r}_{\text{F,off}}^{\mathrm{L2O}}(\alpha)$ scales with the \emph{projected} Rx aperture $D_2\cos\alpha$, i.e., $\tilde{r}_{\text{F,off}}^{\mathrm{L2O}}(\alpha)\propto (D_2\cos\alpha)^2/(\lambda\varphi)$. Hence, increasing the off-boresight angle magnitude $|\alpha|$ reduces the projected aperture and typically \emph{decreases} the near-field boundary, which is consistent with the intuition that a more grazing incidence produces weaker aperture-induced phase curvature along the Rx. When $\alpha=0$, the boundary reduces to $r_{\text{F,off}}^{\mathrm{L2O}}(0)=\pi D_2^2/(4\lambda\varphi)$, consistent with~\cite{selvan2017fraunhofer}.}

\section{\rew{UPA Near-Field Distance Calculation}}\label{sec:UPA}

Exploiting the key results for the ULA case (Sec.~\ref{sec:ULA} above) and the general approach in Sec.~\ref{sec:ULA-to-ULA(general)}, we now extend the analysis to more complex UPA configurations. Sec.~\ref{sec:UPA-to-UPA(on-boresight)} considers the on-boresight case ($\alpha = 0, \phi \neq 0, \theta \neq 0$). Sec.~\ref{sec:UPA-to-UPA(off-boresight,1)} discusses the off-boresight setup with a single-angle Tx rotation ($\alpha \neq 0, \phi = 0, \theta \neq 0$), while Sec.~\ref{sec:UPA-to-UPA(off-boresight,2)} generalizes the analysis to dual-angle Tx rotations ($\alpha \neq 0, \phi \neq 0, \theta \neq 0$). \rew{To illustrate the extensibility of the proposed geometry-driven framework beyond same-type arrays, Sec.~\ref{sec:mixed_P2L_on} provides representative mixed UPA--ULA (P2L) on-boresight examples.} Finally, Sec.~\ref{sec:UPA-to-point} addresses the UPA-to-point setup with off-boresight displacement ($\alpha \neq 0, N_{1} = 1$).

\subsection{UPA-to-UPA case (on-boresight)}\label{sec:UPA-to-UPA(on-boresight)}
\begin{figure}[!t]
    \centering
    \includegraphics[width=0.9\linewidth]{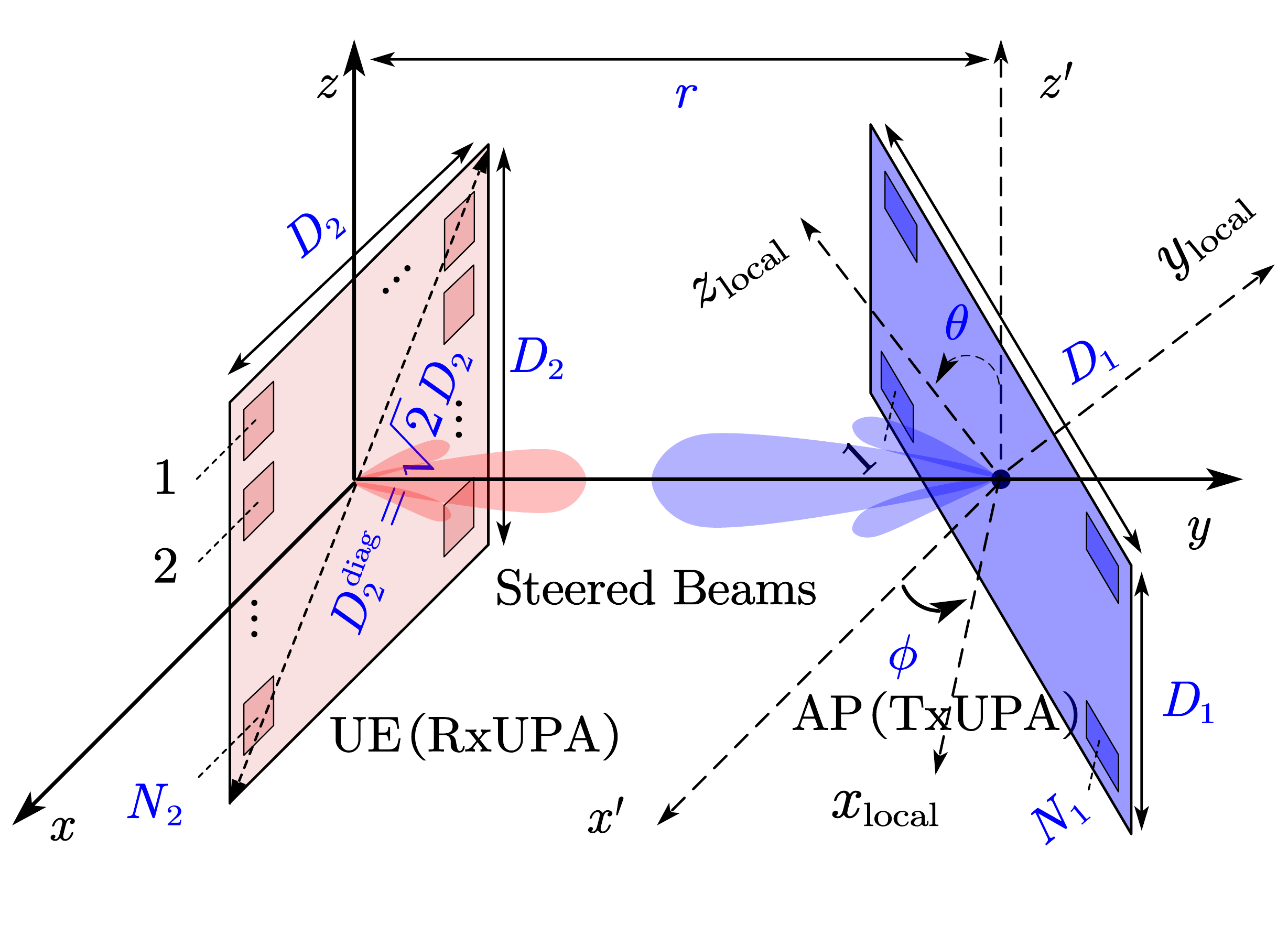}
    \caption{On-boresight UPA-to-UPA setup with TxUPA rotation (Sec.~\ref{sec:UPA-to-UPA(on-boresight)}).}
    \label{fig:UPA-to-UPA_system_model(on-boresight)}
\end{figure}

We now extend the previous analysis to the on-boresight UPA-to-UPA (P2P) configuration, as illustrated in Fig.~\ref{fig:UPA-to-UPA_system_model(on-boresight)}. Similar to the ULA case, the RxUPA is fixed in the $xz$-plane, and we denote the position of its $(m,n)$-th element as $\mathbf{d}_{m,n} = (d_m, 0, d_n)^{\top}$, where $d_m, d_n \in [-D_2/2, D_2/2]$. As described in Sec.~\ref{sec:system_model}, to model realistic misalignments due to small-scale motion, the TxUPA undergoes a composite rotation defined by the matrix $\mathbf{R}(\theta, \phi) = \mathbf{R}_z(\phi)\mathbf{R}_x(\theta)$, where the component rotation matrices are

\begin{align}
\mathbf{R}_x(\theta) &=
\begin{bmatrix}
1 & 0 & 0 \\
0 & \cos\theta & -\sin\theta \\
0 & \sin\theta & \cos\theta
\end{bmatrix}, \label{rotation_x} \\
\mathbf{R}_z(\phi) &=
\begin{bmatrix}
\cos\phi & -\sin\phi & 0 \\
\sin\phi & \cos\phi & 0 \\
0 & 0 & 1
\end{bmatrix}. \label{rotation_z}
\end{align}

Let the local position of the $(i,j)$-th TxUPA element be $\mathbf{d}_{i,j} = (d_i, 0, d_j)^\top$, with $d_i, d_j \in [-D_1/2, D_1/2]$. After applying the composite rotation $\mathbf{R}(\theta,\phi) = \mathbf{R}_z(\phi)\mathbf{R}_x(\theta)$, its global position becomes
\begin{equation}
\mathbf{d}'_{i,j} = \mathbf{R}(\theta,\phi)\mathbf{d}_{i,j} + r\mathbf{e}_y,
\end{equation}
where $\mathbf{e}_y = (0,1,0)^{\top}$ is the unit vector along the $y$-axis and $r$ denotes the center-to-center Tx–Rx separation.

The distance between the $(m,n)$-th Rx element and the $(i,j)$-th Tx element is
\begin{equation}
r^{i,j}_{m,n} = \left\| \mathbf{d}_{m,n} - \mathbf{d}'_{i,j} \right\|.
\end{equation}

To steer the beam from TxUPA to RxUPA along the $-y$ axis, pre-steering phase is applied at the Tx, given by
\begin{equation}
\tilde{\phi}_{i,j} = -\frac{2\pi}{\lambda} \tilde{\mathbf{d}}_{i,j}^{\top} \cdot (-\mathbf{e}_y) = \frac{2\pi}{\lambda} \tilde{\mathbf{d}}_{i,j}^{\top} \cdot \mathbf{e}_y,
\end{equation}
where $\tilde{\mathbf{d}}_{i,j} = \mathbf{R}(\theta,\phi)\mathbf{d}_{i,j}$.

The effective propagation distance after pre-steering is
\begin{equation}
r'^{i,j}_{m,n} = r^{i,j}_{m,n} + \tilde{\mathbf{d}}^{\top}_{i,j} \cdot \mathbf{e}_y.
\end{equation}

We define the near-field region using the same phase-error-based criterion as in Section~\ref{sec:ULA-to-ULA(on-boresight)}, requiring the maximal variation in $r'^{i,j}_{m,n}$ to be within $\lambda\varphi/2\pi$, i.e., 
\begin{equation}
r = r_{\theta,\phi,0}^{\text{P2P}}, \quad \mathrm{s}.\mathrm{t}. \max_{m,n,i,j} {r}'^{i,j}_{m,n} - \min_{m,n,i,j} {r}'^{i,j}_{m,n} = \frac{\lambda\varphi}{2\pi}.
\end{equation}

By geometry, and denoting $\mathbf{v}^{i,j}_{m,n} = \mathbf{d}_{m,n} - \tilde{\mathbf{d}}_{i,j}$, we expand
\begin{equation}
r^{i,j}_{m,n} = \sqrt{r^2 - 2r\, \mathbf{e}_y^\top \mathbf{v}^{i,j}_{m,n} + \|\mathbf{v}^{i,j}_{m,n}\|^2}.
\end{equation}

Applying the third-order Taylor approximation yields
\begin{equation}
r^{i,j}_{m,n} \approx r - \mathbf{e}_y^\top \mathbf{v}^{i,j}_{m,n} + \frac{ \|\mathbf{v}^{i,j}_{m,n}\|^2 - (\mathbf{e}_y^\top \mathbf{v}^{i,j}_{m,n})^2 }{2r}.
\end{equation}
Then,
\begin{equation}\label{equ:P2P_inequality}
r'^{i,j}_{m,n} \approx r + \frac{ \left( v_{m,n}^{(x)i,j} \right)^2 + \left( v_{m,n}^{(z)i,j} \right)^2 }{2r},
\end{equation}
where
\begin{align}
v_{m,n}^{(x)i,j} &= d_m - d_i \cos\phi - d_j \sin\phi \sin\theta, \\
v_{m,n}^{(z)i,j} &= d_n - d_j \cos\theta.
\end{align}

Noting that the minimum value is achieved when $d_i=d_j=d_m=d_n=0$, the near-field condition becomes
\begin{equation}
\max_{m,n,i,j} \frac{ \left( v_{m,n}^{(x)i,j} \right)^2 + \left( v_{m,n}^{(z)i,j} \right)^2 }{2r} \leq \frac{\lambda\varphi}{2\pi}.
\end{equation}

The worst-case configuration occurs when $d_i = {D_1}/{2}$, $d_j = -\operatorname{sgn}(\sin\phi \sin\theta){D_1}/{2}$, $d_m = {D_2}/{2}$, $ d_n = -\operatorname{sgn}(d_j) {D_2}/{2}$. Therefore, \textbf{the corresponding P2P on-boresight near-field distance is then given by}
\begin{equation}
\begin{split}
r_{\text{F,on}}^{\text{P2P}}(\theta,\phi)&=\frac{\pi \left( D_1\cos \theta +D_2 \right) ^2}{4\lambda \varphi}
\\
&+\frac{\pi \left( D_1\left( \cos \phi +|\sin \theta \sin \phi | \right) +D_2 \right) ^2}{4\lambda \varphi}.
\end{split}
\label{equ:UPA-to-UPA_onboresight}
\end{equation}

\rew{\textbf{Analysis:} The on-boresight P2P boundary in \eqref{equ:UPA-to-UPA_onboresight} is the \emph{sum of two Fraunhofer-type contributions}, which reflects the two transverse dimensions of a UPA aperture. The first term depends on $D_1\cos\theta + D_2$ and captures the vertical rotation impact through a simple projection (similar to the ULA behavior). In contrast, the second term depends on $D_1(\cos\phi + |\sin\theta\sin\phi|)+D_2$ and explicitly couples azimuth and elevation, which implies that misalignment can alter the effective aperture in a \emph{non-separable} manner. This angle coupling is absent in the ULA case and is a key reason why ignoring misalignment can be inaccurate for UPAs under general 3D orientations.} When $\varphi=\pi/8$, \eqref{equ:UPA-to-UPA_onboresight} reduces to a 2D generalization of the classical Fraunhofer distance, and when $\theta = \phi = 0$, it further simplifies to
\begin{equation}\label{equ:traditional_upa}
    r_{\text{F,on}}^{\text{P2P}}(0,0)|_{\varphi=\frac{\pi}{8}} = \frac{4(D_1 + D_2)^2}{\lambda},
\end{equation}
consistent with the previous P2P near-field distance in \cite{petrov2023near}.

\subsection{UPA-to-UPA case (off-boresight) with single-angle rotation}\label{sec:UPA-to-UPA(off-boresight,1)}

We now extend the analysis to the off-boresight P2P scenario. As a first step, we consider the single-angle rotation case of Fig.~\ref{fig:UPA-to-UPA_system_model(off-boresight,two_angles)} with the TxUPA rotation angle $\phi=0$. In this configuration, the RxUPA remains fixed in the $xz$-plane, while the TxUPA is rotated around the $x$-axis by an angle $\theta \in \alpha+[-\pi/2 , \pi/2 ]$ and steers its beam toward an off-boresight angle $\alpha \in [-\pi/2, \pi/2]$. Let $\mathbf{e}_r = (0, \cos\alpha, \sin\alpha)^\top$ denote the unit vector along the line connecting the array centers. The global position of the $(i,j)$-th element on the TxUPA is then given by
\begin{equation}
\mathbf{d}_{i,j} = r \mathbf{e}_r + \mathbf{R}_x(\theta) \mathbf{d}^{xz}_{i,j},
\end{equation}
where $\mathbf{R}_x(\theta)$ is the $x$-axis rotation matrix defined in \eqref{rotation_x} and $\mathbf{d}^{xz}_{i,j} = (d_i, 0, d_j)^\top$ is the local coordinate of the Tx element.

Accordingly, the distance between the $(m,n)$-th Rx element and the $(i,j)$-th Tx element is
\begin{equation}\label{equ:UPA-to-UPA_r_define_off-boresight}
r^{i,j}_{m,n} = \left\| \mathbf{d}_{m,n} - \mathbf{d}_{i,j} \right\|,
\end{equation}
which can be rewritten as
\begin{equation}
r^{i,j}_{m,n} = \sqrt{r^2 - 2r \mathbf{e}_r^\top \mathbf{u}^{i,j}_{m,n} + \left\| \mathbf{u}^{i,j}_{m,n} \right\|^2},
\end{equation}
where $\mathbf{u}^{i,j}_{m,n} = \mathbf{d}_{m,n} - \mathbf{R}_x(\theta) \mathbf{d}^{xz}_{i,j}$ and $\mathbf{d}_{m,n} = (d_m, 0, d_n)^\top$.

To steer the beam toward the Rx, the Tx applies a pre-steering phase aligned with $-\mathbf{e}_r$, given by
\begin{equation}
\tilde{\phi}_{i,j} = - \frac{2\pi }{\lambda}\left( \mathbf{R}_x(\theta) \mathbf{d}^{xz}_{i,j} \right)^{\top} \cdot (-\mathbf{e}_r).
\end{equation}

Similar to the off-boresight ULA case, the off-boresight UPA configuration also requires an additional pre-steering phase at the Rx array, given by
\begin{equation}\label{equ:RxUPA_compensation}
\tilde{\phi}_{m,n} = - \frac{2\pi }{\lambda}\mathbf{d}^{\top}_{m,n} \cdot \mathbf{e}_r.
\end{equation}

As a result, the total adjusted phase difference corresponds to the effective distance
\begin{equation}
{r}'^{i,j}_{m,n} = r^{i,j}_{m,n} + \mathbf{d}^{\top}_{m,n} \cdot \mathbf{e}_r - \left( \mathbf{R}_x(\theta) \mathbf{d}^{xz}_{i,j} \right)^{\top} \cdot \mathbf{e}_r.
\end{equation}

Applying the third-order Taylor approximation $\sqrt{1 + x} \approx 1 + \frac{1}{2}x - \frac{1}{8}x^2 + \frac{1}{16}x^3$, we obtain
\begin{equation}
r^{i,j}_{m,n} = r - \mathbf{e}_r^\top \mathbf{u}^{i,j}_{m,n} + \frac{\|\mathbf{u}^{i,j}_{m,n}\|^2 - \left( \mathbf{e}_r^\top \mathbf{u}^{i,j}_{m,n} \right)^2}{2r} + o\left( \frac{1}{r^2} \right).
\end{equation}

By defining $\mathbf{u}^{i,j}_{m,n} = ( u^{(x)i,j}_{m,n}, u^{(y)i,j}_{m,n}, u^{(z)i,j}_{m,n})^\top$, we have
\begin{equation}
    \mathbf{e}_r^\top \mathbf{u}^{i,j}_{m,n} = \cos\alpha \, u^{(y)i,j}_{m,n}+ \sin\alpha \, u^{(z)i,j}_{m,n},
\end{equation}
\begin{equation}
    \begin{split}
        &\left\| \mathbf{u}^{i,j}_{m,n} \right\|^2 - \left( \mathbf{e}_r^\top \mathbf{u}^{i,j}_{m,n} \right)^2 = ( u^{(x)i,j}_{m,n} )^2 \\
        &+ ( \sin\alpha \, u^{(y)i,j}_{m,n} - \cos\alpha \, u^{(z)i,j}_{m,n} )^2,
    \end{split}
\end{equation}
where $u^{(x)i,j}_{m,n} = d_m - d_i$, $u^{(y)i,j}_{m,n} = d_j \sin\theta$, and $u^{(z)i,j}_{m,n} = d_n - d_j \cos\theta$.

To quantify the near-field distance, we apply the same phase-error-based criterion as in previous sections. Noting that the minimum effective distance is again attained at the array centers, we require
\begin{equation}\label{equ:define_near-field_U2U_offboresight}
\max_{m,n,i,j}\frac{\left( u^{(x)i,j}_{m,n} \right)^2+\left( \sin\alpha \, u^{(y)i,j}_{m,n} - \cos\alpha \, u^{(z)i,j}_{m,n} \right)^2}{2r} \leq \frac{\lambda\varphi}{2\pi}.
\end{equation}

By applying trigonometric identities, we observe that
\begin{equation}
\sin\alpha\, u_{m,n}^{(y)(i,j)} - \cos\alpha\, u_{m,n}^{(z)(i,j)} = d_j \cos\theta'-d_n \cos\alpha,
\end{equation}
which implies that the maximum deviation is attained at $d_m = D_2/2$, $d_i = -D_1/2$, $d_n = -D_2/2$, and $d_j = D_1/2$. Therefore, \textbf{the corresponding P2P off-boresight near-field distance (with single-angle rotation) is given by}
\begin{equation}
\label{equ:UPA-to-UPA_off-boresight_0}
    r_{\text{F,off}}^{\text{P2P}}(\theta,\alpha) = \frac{\pi\left( D_2 + D_1 \right)^2}{4\lambda\varphi}
    + \frac{\pi\left(  D_1\cos (\theta-\alpha) + D_2\cos \alpha \right)^2}{4\lambda\varphi}.
\end{equation}

\rew{\textbf{Analysis:} The single-angle off-boresight P2P boundary in \eqref{equ:UPA-to-UPA_off-boresight_0} decomposes into two terms with clear physical meanings. The angle-independent term $\pi(D_2+D_1)^2/(4\lambda\varphi)$ stems from the intrinsic second aperture dimension of the UPA that remains unaffected by the considered vertical rotation, while the second term $\pi( D_1\cos(\theta-\alpha) + D_2\cos\alpha )^2/(4\lambda\varphi)$ mirrors the ULA-like projected-aperture behavior along the other transverse dimension. Therefore, compared to the ULA counterpart \eqref{equ:ULA-to-ULA_off_boresight}, UPAs generally exhibit larger near-field distance and a different sensitivity pattern to rotations, since one dimension can contribute a non-negligible baseline even when the other dimension is strongly misaligned.}

\subsection{UPA-to-UPA case (off-boresight) with dual-angle rotation}\label{sec:UPA-to-UPA(off-boresight,2)}
We now consider the more general P2P off-boresight scenario shown in Fig.~\ref{fig:UPA-to-UPA_system_model(off-boresight,two_angles)}, where the TxUPA experiences a composite rotation defined by the matrix $\mathbf{R}(\theta, \phi) = \mathbf{R}_z(\phi)\mathbf{R}_x(\theta)$, resulting in a dual-angle misalignment. The RxUPA remains fixed in the $xz$-plane, and the TxUPA steers its beam toward an off-boresight angle $\alpha$. Therefore, the global position of the $(i,j)$-th Tx element is
\begin{equation}
\mathbf{d}_{i,j} = r \mathbf{e}_r + \mathbf{R}(\theta,\phi) \mathbf{d}^{xz}_{i,j}.
\end{equation}

Accordingly, the distance between the $(m,n)$-th Rx element and the $(i,j)$-th Tx element is
\begin{equation}
r^{i,j}_{m,n} = \left\| \mathbf{d}_{m,n} - \mathbf{d}_{i,j} \right\| = \sqrt{r^2 - 2r \mathbf{e}_r^\top \mathbf{u}^{i,j}_{m,n} + \left\| \mathbf{u}^{i,j}_{m,n} \right\|^2},
\end{equation}
where $\mathbf{u}^{i,j}_{m,n} = \mathbf{d}_{m,n} - \mathbf{R}(\theta,\phi) \mathbf{d}^{xz}_{i,j}$ and $\mathbf{d}_{m,n} = (d_m, 0, d_n)^\top$.

To ensure the transmitted signal aligns with $-\mathbf{e}_r$, the Tx applies a pre-steering phase
\begin{equation}
\tilde{\phi}_{i,j} = -\frac{2\pi}{\lambda} \left( \mathbf{R}(\theta,\phi) \mathbf{d}^{xz}_{i,j} \right)^\top \cdot (-\mathbf{e}_r).
\end{equation}

The RxUPA applies a similar pre-steering phase as defined in~\eqref{equ:RxUPA_compensation}. The total adjusted propagation distance can thus be written as
\begin{equation}
r'^{i,j}_{m,n} = r^{i,j}_{m,n} + \mathbf{d}_{m,n}^\top \cdot \mathbf{e}_r - \left( \mathbf{R}(\theta,\phi)\mathbf{d}^{xz}_{i,j} \right)^\top \cdot \mathbf{e}_r.
\end{equation}

We redefine $\mathbf{u}^{i,j}_{m,n} = ( u^{(x)i,j}_{m,n}, u^{(y)i,j}_{m,n}, u^{(z)i,j}_{m,n})^\top$, where
\begin{align}
u^{(x)i,j}_{m,n} &= d_m - d_i \cos\phi - d_j \sin\phi \sin\theta,\label{equ:A2A_ux} \\
u^{(y)i,j}_{m,n} &= -d_i \sin\phi + d_j \cos\phi \sin\theta,\label{equ:A2A_uy} \\
u^{(z)i,j}_{m,n} &= d_n - d_j \cos\theta.\label{equ:A2A_uz}
\end{align}

Substituting~\eqref{equ:A2A_ux}–\eqref{equ:A2A_uz} into the near-field criterion~\eqref{equ:define_near-field_U2U_offboresight}, we define the misalignment-induced intensity term as
\begin{equation}\label{equ:A2A_I}
\mathcal{I}_{m,n}^{i,j} \triangleq ( u^{(x)i,j}_{m,n} )^2 + ( \sin\alpha \, u^{(y)i,j}_{m,n} - \cos\alpha \, u^{(z)i,j}_{m,n} )^2.
\end{equation}

\begin{figure*}[!t]
\begin{equation}\label{equ:A2A_I_2}
      \mathcal{I} _{m,n}^{i,j}=(d_m-d_i\cos \phi -d_j\sin \phi \sin \theta )^2+(d_n\cos \alpha +d_i\sin \alpha \sin \phi -d_j\left( \cos \theta \cos \alpha +\cos \phi \sin \theta \sin \alpha \right) )^2.
\end{equation}
\hrulefill
\end{figure*}

Expanding $\mathcal{I}_{m,n}^{i,j}$ yields the expression shown in~\eqref{equ:A2A_I_2} at the top of the this page. To evaluate $\max_{m,n,i,j} \mathcal{I}_{m,n}^{i,j}$, we note that $\mathcal{I}_{m,n}^{i,j}$ is convex with respect to $(d_i, d_j, d_m, d_n)$ over the compact set $[-D_1/2, D_1/2]^2 \times [-D_2/2, D_2/2]^2$. The maximum is thus attained at the vertices. Since $d_m$ and $d_n$ are uncoupled in~\eqref{equ:A2A_I_2}, we set $d_m = -\text{sgn}(d_i\cos \phi +d_j\sin \phi \sin \theta)D_2/2$ and $d_n = -\text{sgn}(d_i\sin \alpha \sin \phi -d_j\left( \cos \theta \cos \alpha +\cos \phi \sin \theta \sin \alpha \right))D_2/2$. Substituting these into~\eqref{equ:define_near-field_U2U_offboresight}, \textbf{the corresponding P2P off-boresight near-field distance (with dual-angle rotation) is given by}
\begin{equation}\label{equ:A2A_offboresight}
    r_{\text{F,off}}^{\text{P2P}}(\theta,\phi,\alpha)=\max \left( r_{\text{(a)}}^{\text{P2P}}(\theta,\phi,\alpha),r_{\text{(b)}}^{\text{P2P}}(\theta,\phi,\alpha) \right),
\end{equation}
where $r_{\text{(a)}}^{\text{P2P}}(\theta,\phi,\alpha)$ and $r_{\text{(b)}}^{\text{P2P}}(\theta,\phi,\alpha)$ are defined in~\eqref{equ:UPA-to-UPA_dual-angle_a} and~\eqref{equ:UPA-to-UPA_dual-angle_b}, respectively, which are located at the top of the next page. These expressions are attained under the conditions $d_i = \operatorname{sgn}(d_j)$ and $d_i = -\operatorname{sgn}(d_j)$, respectively.

\begin{figure*}[!t]
\begin{equation}\label{equ:UPA-to-UPA_dual-angle_a}
     r_{\text{(a)}}^{\text{P2P}}(\theta,\phi,\alpha)=\frac{\pi (D_2+D_1|\cos \phi + \sin \phi \sin \theta |)^2}{4\lambda \varphi}+\frac{\pi (D_2\cos \alpha +D_1|\cos \theta \cos \alpha +\cos \phi \sin \theta \sin \alpha - \sin \alpha \sin \phi |)^2}{4\lambda \varphi}.
\end{equation}
\end{figure*}
\begin{figure*}[!t]
\begin{equation}\label{equ:UPA-to-UPA_dual-angle_b}
     r_{\text{(b)}}^{\text{P2P}}(\theta,\phi,\alpha)=\frac{\pi (D_2+D_1|\cos \phi - \sin \phi \sin \theta |)^2}{4\lambda \varphi}+\frac{\pi (D_2\cos \alpha +D_1|\cos \theta \cos \alpha +\cos \phi \sin \theta \sin \alpha + \sin \alpha \sin \phi |)^2}{4\lambda \varphi}.
\end{equation}
\hrulefill
\end{figure*}

\begin{figure*}[!t]
\centering
\subfloat[$\alpha = 0^{\circ}$]{\includegraphics[width=0.34\textwidth]{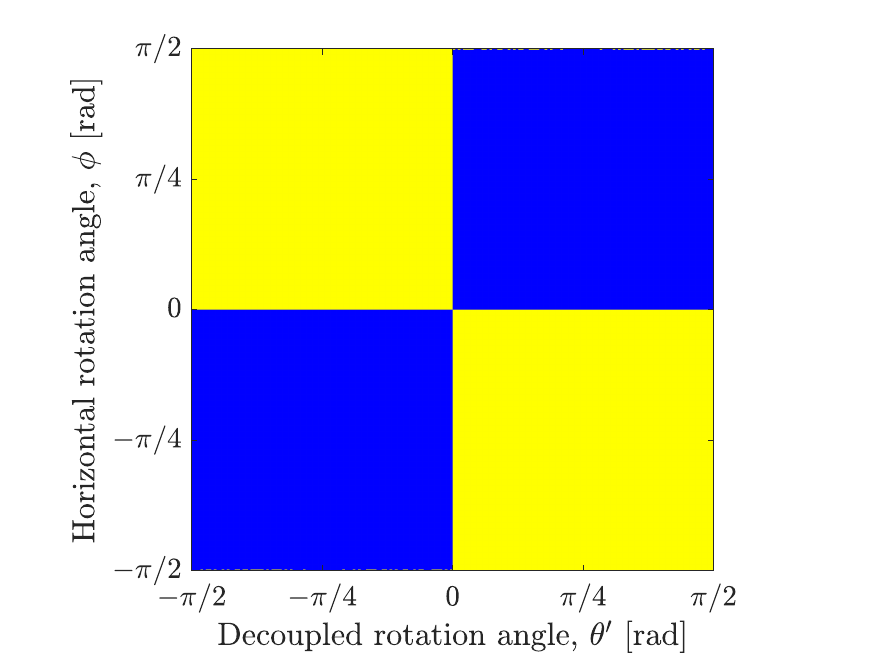}%
\label{fig:dominance_beta_0}}
\hspace{-3mm}
\subfloat[$\alpha = 30^{\circ}$]{\includegraphics[width=0.34\textwidth]{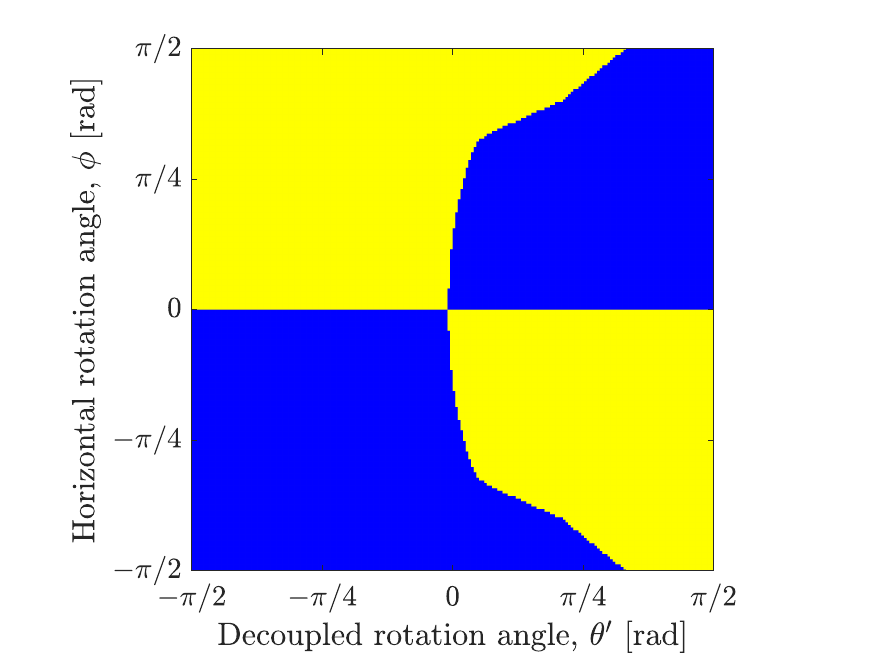}%
\label{fig:dominance_beta_30}}
\hspace{-3mm}
\subfloat[$\alpha = 60^{\circ}$]{\includegraphics[width=0.34\textwidth]{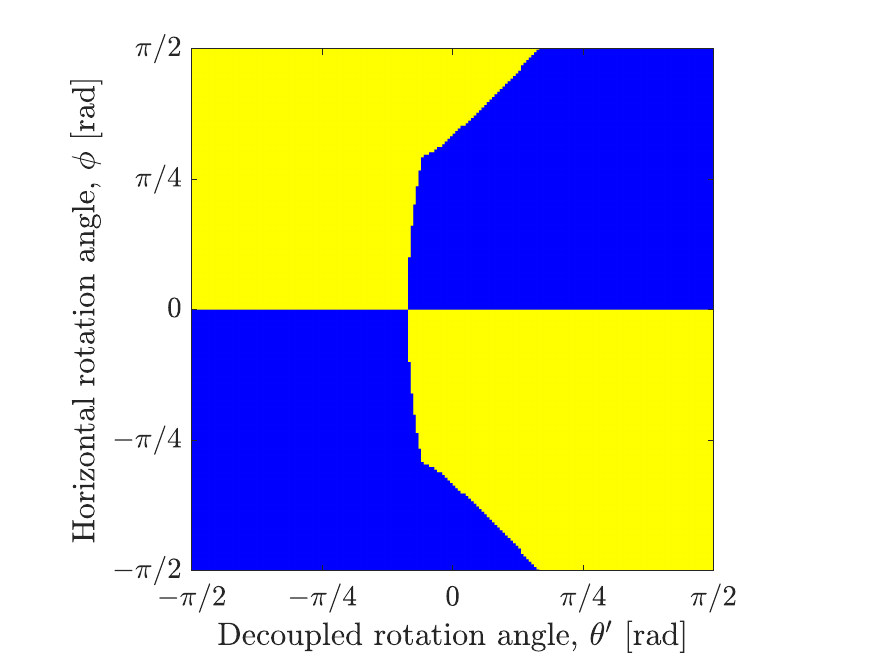}%
\label{fig:dominance_beta_60}}
\caption{Dominance regions of the two closed-form expressions in the $(\theta', \phi)$ parameter space for different off-boresight angles $\alpha$. The blue regions indicate where~\eqref{equ:UPA-to-UPA_dual-angle_a} yields a larger near-field distance, while the yellow regions correspond to the dominance of~\eqref{equ:UPA-to-UPA_dual-angle_b}.}
\label{fig:xi_dominance_regions}
\end{figure*}

\begin{figure}[!t]
    \centering
    \subfloat[No rotation.]{%
    \includegraphics[width=0.9\linewidth]{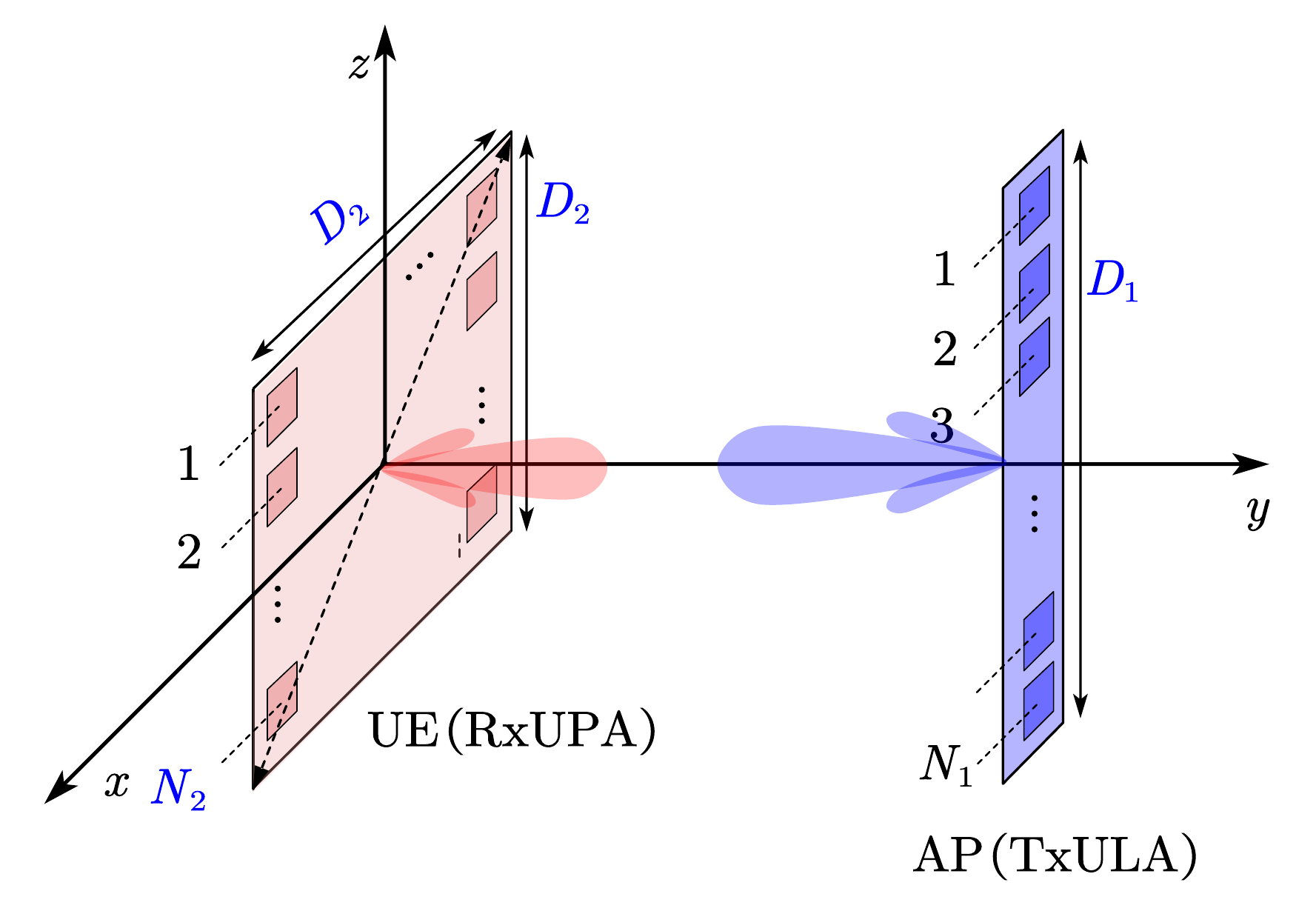}%
    }\hfill
    \subfloat[Single $x$-axis rotation.]{%
    \includegraphics[width=0.9\linewidth]{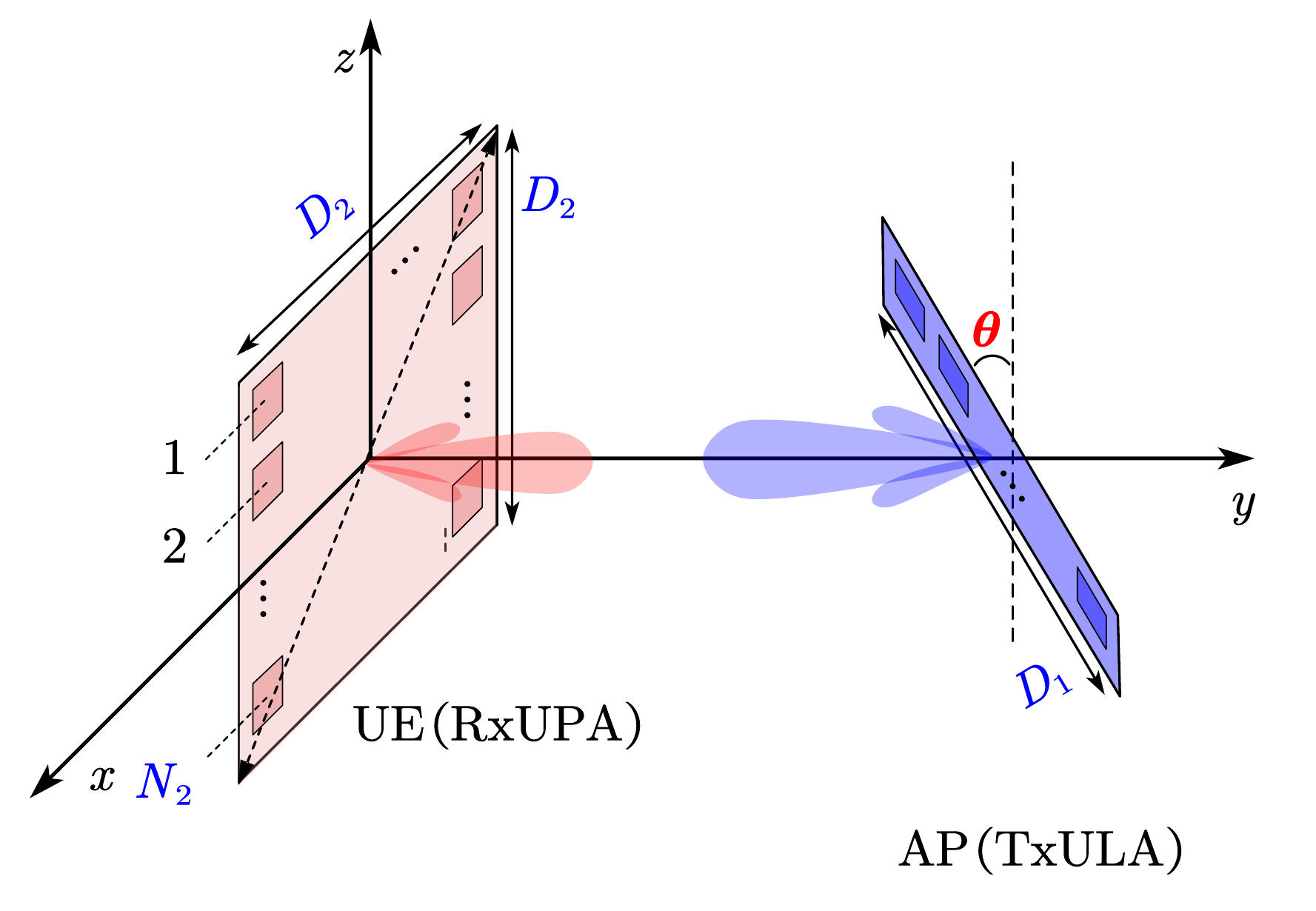}%
    }
    \caption{\rew{Mixed P2L on-boresight scenario.}}
    \label{fig:mixed_P2L_example}
\end{figure}

\rew{\textbf{Analysis:} The dual-angle off-boresight boundary in \eqref{equ:A2A_offboresight} takes a max over two branches \eqref{equ:UPA-to-UPA_dual-angle_a}-\eqref{equ:UPA-to-UPA_dual-angle_b} because different corner-element sign patterns can dominate the worst-case phase mismatch under general 3D orientations. Compared to the ULA case (where a single dominant expression typically suffices), this branch-switching behavior is a distinctive property of UPAs and directly reflects strong azimuth-elevation coupling induced by $(\theta,\phi)$ and the off-boresight displacement $\alpha$. The structure also yields consistent special cases: setting $\alpha=0$ reduces \eqref{equ:A2A_offboresight} to the on-boresight expression \eqref{equ:UPA-to-UPA_onboresight}, while setting $\phi=0$ collapses the dual-angle model to the single-angle off-boresight configuration in \eqref{equ:UPA-to-UPA_off-boresight_0}. The dominance maps in Fig.~\ref{fig:xi_dominance_regions} further visualize how the max in \eqref{equ:A2A_offboresight} induces asymmetric regions and nontrivial coupling effects as $\alpha$ increases, which cannot be captured by conventional no-rotation baselines.}

\subsection{\rew{Representative mixed P2L examples (on-boresight)}}\label{sec:mixed_P2L_on}

\rew{So far, we have focused on the two canonical and most representative aperture pairings, namely L2L and P2P, to keep a clear main storyline and to obtain compact closed-form characterizations. Nevertheless, the proposed phase-error-based methodology is geometry-driven and can be readily applied to other array-type combinations by specifying the corresponding element coordinates and orientation model. To illustrate this extensibility in a simple setting, we consider a mixed on-boresight P2L link, where the Rx is a square UPA (side length $D_2$) fixed in the $xz$-plane (boresight along $+y$), and the Tx is a ULA of length $D_1$ centered on the $y$-axis. We provide two representative cases: (i) no rotation, and (ii) a single $x$-axis rotation of the TxULA. A complete treatment of arbitrary mixed-array rotations is possible within the same framework, but it admits multiple non-equivalent geometries and may lead to lengthy case-by-case dominance analysis. Therefore, we use the following two cases as illustrative references.}

\rew{The RxUPA is fixed in the $xz$-plane and the $(m,n)$-th Rx element is located at $\mathbf{d}_{m,n}=(d_m,0,d_n)^\top$, where $d_m,d_n\in[-D_2/2,D_2/2]$. The TxULA has elements indexed by $n_1$ with local coordinate $\mathbf{d}_{n_1}=(0,0,d_{n_1})^\top$, where $d_{n_1}\in[-D_1/2,D_1/2]$. The center-to-center separation is $r$ along the $y$-axis, i.e., the Tx array center is at $r\mathbf{e}_y$ with $\mathbf{e}_y=(0,1,0)^\top$.}

\subsubsection*{\rew{Case~1: No rotation (Fig.~\ref{fig:mixed_P2L_example}(a))}}
\rew{The global position of the $n_1$-th Tx element is $\mathbf{d}'_{n_1}=r\mathbf{e}_y+\mathbf{d}_{n_1}$. The distance between the $(m,n)$-th Rx element and the $n_1$-th Tx element is}
\rew{\begin{equation}
r^{n_1}_{m,n}=\bigl\|\mathbf{d}_{m,n}-\mathbf{d}'_{n_1}\bigr\|
=\sqrt{r^2+d_m^2+(d_n-d_{n_1})^2}.
\end{equation}}
\rew{Applying the same third-order Taylor approximation as in Sec.~\ref{sec:ULA-to-ULA(on-boresight)}, we obtain}
\rew{\begin{equation}
r^{n_1}_{m,n}\approx r+\frac{d_m^2+(d_n-d_{n_1})^2}{2r}.
\end{equation}}
\rew{Hence, under the generalized phase threshold $\varphi$, the near-field condition becomes}
\rew{\begin{equation}
\max_{m,n,n_1}\frac{d_m^2+(d_n-d_{n_1})^2}{2r}\le \frac{\lambda\varphi}{2\pi}.
\end{equation}}
\rew{The maximum is attained at the aperture endpoints $|d_m|=D_2/2$ and $|d_n-d_{n_1}|=(D_2+D_1)/2$, which yields}
\rew{\begin{equation}
\frac{D_2^2+(D_1+D_2)^2}{8r}=\frac{\lambda\varphi}{2\pi}.
\end{equation}}
\rew{Solving for $r$, \textbf{the on-boresight near-field boundary distance for the mixed P2L link without rotation is}}
\rew{\begin{equation}
r_{\text{F,on}}^{\text{P2L}}(0)=\frac{\pi D_2^2}{4\lambda\varphi}+\frac{\pi (D_1+D_2)^2}{4\lambda\varphi}.
\label{equ:P2L_on_no_rot}
\end{equation}}

\subsubsection*{\rew{Case~2: Single-angle rotation of the TxULA around the $x$-axis (Fig.~\ref{fig:mixed_P2L_example}(b))}}
\rew{Now the TxULA is rotated by $\theta$ around the $x$-axis, so the rotated local coordinate becomes $\tilde{\mathbf{d}}_{n_1}=\mathbf{R}_x(\theta)\mathbf{d}_{n_1}=(0,-d_{n_1}\sin\theta, d_{n_1}\cos\theta)^\top$, where $\mathbf{R}_x(\theta)$ is given in~\eqref{rotation_x}. The global position of the $n_1$-th Tx element is $\mathbf{d}'_{n_1}=r\mathbf{e}_y+\tilde{\mathbf{d}}_{n_1}$. Define $\mathbf{v}^{n_1}_{m,n}=\mathbf{d}_{m,n}-\tilde{\mathbf{d}}_{n_1}=(d_m,\,d_{n_1}\sin\theta,\,d_n-d_{n_1}\cos\theta)^\top$. Then}
\rew{\begin{equation}
r^{n_1}_{m,n}=\sqrt{r^2-2r\,\mathbf{e}_y^\top\mathbf{v}^{n_1}_{m,n}+\|\mathbf{v}^{n_1}_{m,n}\|^2}.
\end{equation}}
\rew{Applying the same Taylor expansion as in Sec.~\ref{sec:ULA-to-ULA(on-boresight)} yields}
\rew{\begin{equation}
r^{n_1}_{m,n}\approx r-d_{n_1}\sin\theta+\frac{d_m^2+(d_n-d_{n_1}\cos\theta)^2}{2r}.
\end{equation}}
\rew{To steer the beam toward the Rx center along the $-y$ direction, the Tx applies a pre-steering phase that cancels the linear term in $d_{n_1}$, which is equivalent to using the effective distance $r'^{n_1}_{m,n}=r^{n_1}_{m,n}+d_{n_1}\sin\theta$. Therefore,}
\rew{\begin{equation}
r'^{n_1}_{m,n}\approx r+\frac{d_m^2+(d_n-d_{n_1}\cos\theta)^2}{2r}.
\end{equation}}
\rew{The near-field condition then becomes}
\rew{\begin{equation}
\max_{m,n,n_1}\frac{d_m^2+(d_n-d_{n_1}\cos\theta)^2}{2r}\le \frac{\lambda\varphi}{2\pi}.
\end{equation}}
\rew{The maximum is attained at $|d_m|=D_2/2$ and $|d_n-d_{n_1}\cos\theta|=(D_2+D_1\cos\theta)/2$, yielding}
\rew{\begin{equation}
\frac{D_2^2+(D_1\cos\theta+D_2)^2}{8r}=\frac{\lambda\varphi}{2\pi}.
\end{equation}}
\rew{Solving for $r$, \textbf{the on-boresight near-field boundary distance for the mixed P2L with a single $x$-axis rotation is}}
\rew{\begin{equation}
r_{\text{F,on}}^{\text{P2L}}(\theta)=\frac{\pi D_2^2}{4\lambda\varphi}+\frac{\pi (D_1\cos\theta+D_2)^2}{4\lambda\varphi}.
\label{equ:P2L_on_rot_x}
\end{equation}}

\rew{\textbf{Analysis:} Comparing \eqref{equ:P2L_on_rot_x} with the on-boresight L2L boundary in~\eqref{equ:U-to-ULA_on_boresight}, we observe a simple decomposition
$r_{\mathrm{F,on}}^{\mathrm{P2L}}(\theta)=\frac{\pi D_2^2}{4\lambda\varphi}+r_{\mathrm{F,on}}^{\mathrm{L2L}}(\theta)$,
where the additional term $\pi D_2^2/(4\lambda\varphi)$ stems from the extra transverse aperture dimension of the RxUPA (the $x$-dimension), while the second term follows the same projected-aperture structure driven by $D_1\cos\theta+D_2$.}

\rew{\textbf{Extensibility to other geometries:}
The above mixed P2L example demonstrates that the proposed derivation is purely geometry-driven: once the element coordinates and the rotation/orientation model are specified, the per-component effective phases (propagation plus Tx/Rx phase compensations) can be evaluated and the same phase-deviation criterion applies. Hence, beyond mixed array types, the framework can also accommodate non-coplanar (3D) orientations of ULA/UPA arrays by using the corresponding 3D rotations.}

\subsection{UPA-to-point case (off-boresight)}\label{sec:UPA-to-point}
\begin{figure}[!t]
    \centering
    \includegraphics[width=0.9\linewidth]{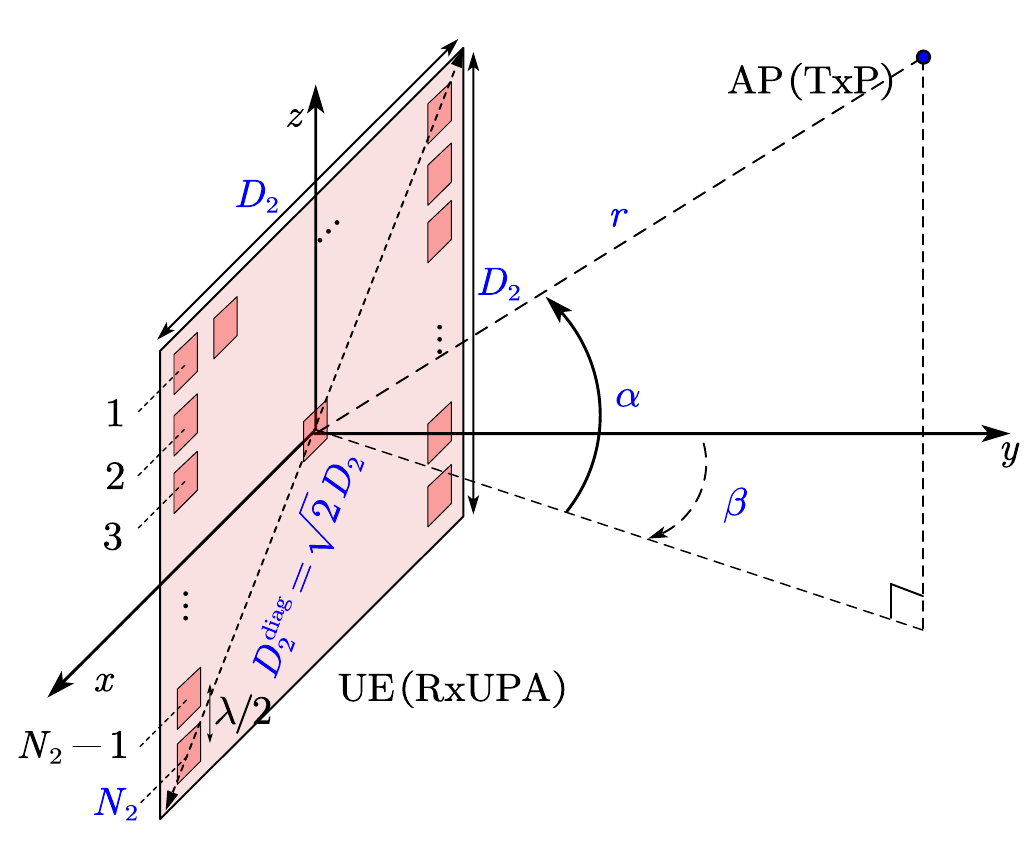}
    \caption{Off-boresight UPA-to-point setup (analyzed in Sec.~\ref{sec:UPA-to-point}).}
    \label{fig:UPA-to-point_system_model}
\end{figure}
We next consider a special case where the transmitter is reduced to a TxP, as in Fig.~\ref{fig:UPA-to-point_system_model}. The RxUPA is placed in the $xz$-plane and centered at the origin, while the point transmitter is located at a distance $r$ from the array center. Its position is characterized by two angular parameters, $\alpha$ and $\beta$, both ranging from $[-\pi/2, \pi/2]$. Specifically, $\alpha$ denotes the elevation angle (i.e., the deviation from the $xy$-plane toward the $z$-axis), and $\beta$ denotes the azimuth angle (i.e., the deviation from the $y$-axis toward the $x$-axis) in the standard spherical coordinate system. Under this definition, the Cartesian coordinates of the point source are given by $\mathbf{p} = (r \sin\beta \cos\alpha,\ r \cos\beta \cos\alpha,\ r \sin\alpha)^\top$.

\rew{Following the near-field criterion in the previous section, we define the off-boresight UPA-to-point near-field distance $r_{\text{F,off}}^{\text{P2O}}(\alpha,\beta)$ as}
\begin{equation}\label{equ:def_P2O_off}
\rew{r=r_{\text{F,off}}^{\text{P2O}}(\alpha,\beta),
\quad \text{s.t.}\quad
\max_{m,n} r'_{m,n}
-\min_{m,n} r'_{m,n}
\le \frac{\lambda\varphi}{2\pi}.}
\end{equation}

The physical distance between the $(i,j)$-th RxUPA element and the point transmitter is $r_{m,n}=\left\| \mathbf{d}_{m,n}-\mathbf{p} \right\|.$ To coherently combine the received signal from the TxP, each RxUPA element applies a pre-steering phase given by $\tilde{\phi}_{m,n} = -{2\pi} \mathbf{d}_{m,n}^\top \cdot \mathbf{u}_{\text{p}}/{\lambda}$ where $\mathbf{u}_{\mathrm{p}}= ( \sin\beta\cos\alpha, \cos\beta \cos\alpha, \sin\alpha)^{\top}$ denotes the unit vector pointing from the array center toward the TxP. Therefore, the total adjusted phase delay corresponds to an effective distance $r'_{m,n} = r_{m,n} + \mathbf{d}_{m,n}^\top \cdot \mathbf{u}_{\text{p}}.$

Note that the position of the point can be rewritten as $\mathbf{p}=r\mathbf{u}_{\text{p}}$. Applying the third-order Taylor approximation
\begin{equation}
r_{m,n}  = r - \mathbf{d}_{m,n}^\top \cdot \mathbf{u}_{\text{p}} + \frac{\|\mathbf{d}_{m,n}\|^2 - (\mathbf{d}_{m,n}^\top \cdot \mathbf{u}_{\text{p}})^2}{2r} + o\left(\frac{1}{r^2}\right).
\end{equation}

Substituting this approximation into the effective distance expression, the first-order linear term cancels out, leading to
\begin{equation}\label{equ:A2P_r}
r'_{m,n} \approx r + \frac{\|\mathbf{d}_{m,n}\|^2 - (\mathbf{d}_{m,n}^\top \cdot \mathbf{u}_{\text{p}})^2}{2r}.
\end{equation}

Note that the minimum value of $r'_{m,n}$ is achieved when $d_m=d_n=0$, i.e., $\min_{m,n}r'_{m,n}=r$. Substituting $\mathbf{u}_p$ into~\eqref{equ:A2P_r}, the near-field condition in previous section becomes
\begin{equation}
\max_{m,n} \frac{ d_{m}^{2}\cos ^2\beta +\left( d_m\sin \alpha \sin \beta -d_n\cos \alpha \right)  }{2r} \leq \frac{\lambda\varphi}{2\pi},
\end{equation}
which implies that the maximum deviation is attained at $d_m = \operatorname{sgn}(\sin \alpha \sin \beta)D_2/2$ and $d_n = -\operatorname{sgn}(d_m)D_2/2$. Therefore, \textbf{the corresponding UPA-to-point off-boresight near-field distance is given by}
\begin{equation}
\label{equ:UPA-to-point_off-boresight}
r_{\text{F,off}}^{\text{P2O}}(\alpha,\beta) = \frac{\pi D_{2}^{2}\cos ^2\beta}{4\lambda \varphi}+\frac{\pi D_{2}^{2}\left( |\sin \alpha \sin \beta |+\cos \alpha \right) ^2}{4\lambda \varphi}.
\end{equation}

\rew{\textbf{Analysis:} The UPA-to-point boundary in \eqref{equ:UPA-to-point_off-boresight} again appears as the sum of two Fraunhofer-type terms, corresponding to the two transverse dimensions of the UPA aperture. The first term scales with $\cos^2\beta$ and captures the azimuth-induced projection in one dimension, while the second term scales with $(|\sin\alpha\sin\beta|+\cos\alpha)^2$ and reveals explicit azimuth--elevation coupling through $(\alpha,\beta)$. This coupling implies that ignoring the angular dependence can misestimate near-field distance under general off-boresight locations, even for a point transmitter.} When $\varphi = \pi/8$, the expression in \eqref{equ:UPA-to-point_off-boresight} reduces to the generalized Fraunhofer distance criterion
\begin{equation}
r_{\text{F,off}}^{\text{P2O}}(\alpha,\beta)|_{\varphi=\frac{\pi}{8}} = \frac{2 D_{2}^{2}}{\lambda} \left[ \cos ^2\beta + \left( |\sin \alpha \sin \beta |+\cos \alpha \right) ^2 \right].
\end{equation}

Furthermore, for the on-boresight case where $\alpha = 0$ and $\beta = 0$ (i.e., the point of interest is aligned with the TxUPA's boresight), the expression simplifies to
\begin{equation}
r_{\text{F,off}}^{\text{P2O}}(0,0)|_{\varphi=\frac{\pi}{8}} =  \frac{\pi D_{2}^{2}}{2\lambda \varphi}.
\end{equation}

\rew{To facilitate readability and practical reuse, we summarize in Table~\ref{tab:nf_summary} the key closed-form near-field boundaries derived in Sections~III-IV across all considered geometries and misalignment settings.}

\begin{table*}[!t]
\centering
\caption{\rew{Main derived closed-form near-field boundary distances under the considered geometries and misalignment settings.}}
\label{tab:nf_summary}
\footnotesize
\setlength{\tabcolsep}{4pt}
\renewcommand{\arraystretch}{1.32}
\begin{tabular}{@{}
  p{0.15\textwidth}
  p{0.18\textwidth}
  p{0.59\textwidth}
  p{0.04\textwidth}
@{}}

\toprule  
\textbf{Geometry} & \textbf{Setting} & \textbf{Closed-form $r_{\mathrm{F}}$} & \textbf{Ref.}\\
\specialrule{1pt}{0pt}{2pt}  

\multicolumn{4}{@{}l}{\textit{\textbf{ULA-based (1D aperture)}}}\\[0.5mm]
\specialrule{1pt}{0pt}{2pt}  

ULA-to-ULA (L2L) & On-boresight &
$\displaystyle
  r_{\text{F,on}}^{\text{L2L}}(\theta)
  = \frac{\pi\!\left(D_1\cos\theta+D_2\right)^2}{4\lambda\varphi}
$ &
\eqref{equ:U-to-ULA_on_boresight}\\

\specialrule{0.25pt}{2pt}{2pt}  

\multirow{3}{=}{ULA-to-ULA (L2L)} &
\multirow{3}{=}{Off-boresight} &
$\displaystyle
  r_{\text{F,off}}^{\text{L2L}}(\theta,\alpha)
  = \max\!\Big(r_{(\text{a})}^{\text{L2L}}(\theta,\alpha),\;
               r_{(\text{b})}^{\text{L2L}}(\theta,\alpha)\Big)
$ &
\multirow{3}{=}{\eqref{equ:L2L_exact}}\\[1.5mm]
& &
$\displaystyle
  r_{(\text{a})}^{\text{L2L}}
  = \frac{\pi\!\left(D_1\cos(\theta-\alpha)+D_2\cos\alpha\right)^2}{4\lambda\varphi}
    +\frac{1}{2}\Big|D_1\sin(\theta-\alpha)-D_2\sin\alpha\Big|
$ & \\[3.5mm]
& &
$\displaystyle
  r_{(\text{b})}^{\text{L2L}}
  = \frac{\pi\!\left(D_1\cos(\theta-\alpha)-D_2\cos\alpha\right)^2}{4\lambda\varphi}
    +\frac{1}{2}\Big|D_1\sin(\theta-\alpha)+D_2\sin\alpha\Big|
$ & \\[2mm]

\specialrule{0.25pt}{2pt}{2pt}  

ULA-to-point (L2O) & Off-boresight &
$\displaystyle
  r_{\text{F,off}}^{\text{L2O}}(\alpha)
  = \frac{\pi D_2^2\cos^2\!\alpha}{4\lambda\varphi}
    + \frac{D_2}{2}\,|\sin\alpha|
$ &
\eqref{equ:L2O_off_closed}\\[1mm]

\specialrule{0.7pt}{3pt}{3pt}   

\multicolumn{4}{@{}l}{\textit{\textbf{UPA-based (2D aperture) and representative mixed P2L examples (on-boresight)}}}\\[0.5mm]
\specialrule{0.8pt}{0pt}{2pt}  

UPA-to-UPA (P2P) & On-boresight &
$\displaystyle
  r_{\text{F,on}}^{\text{P2P}}(\theta,\phi)
  = \frac{\pi\left(D_1\cos\theta+D_2\right)^2}{4\lambda\varphi}
    + \frac{\pi\left(D_1\!\left(\cos\phi+|\sin\theta\sin\phi|\right)+D_2\right)^2}{4\lambda\varphi}
$ &
\eqref{equ:UPA-to-UPA_onboresight}\\[2mm]

\specialrule{0.25pt}{2pt}{2pt}  

UPA-to-UPA (P2P) & \makecell[tl]{Off-boresight\\(single angle)} &
$\displaystyle
  r_{\text{F,off}}^{\text{P2P}}(\theta,\alpha)
  = \frac{\pi\left(D_1+D_2\right)^2}{4\lambda\varphi}
    + \frac{\pi\left(D_1\cos(\theta-\alpha)+D_2\cos\alpha\right)^2}{4\lambda\varphi}
$ &
\eqref{equ:UPA-to-UPA_off-boresight_0}\\[2mm]

\specialrule{0.25pt}{2pt}{2pt}

\multirow{3}{=}{UPA-to-UPA (P2P)} &
\multirow{3}{=}{\makecell[tl]{Off-boresight\\(dual angles)}} &
$\displaystyle
  r_{\text{F,off}}^{\text{P2P}}(\theta,\phi,\alpha)
  = \max\!\Big(r_{(\text{a})}^{\text{P2P}},\;r_{(\text{b})}^{\text{P2P}}\Big)
$ &
\multirow{3}{=}{\eqref{equ:A2A_offboresight}}\\[1.5mm]
& &
$\displaystyle
  r_{(\text{a})}^{\text{P2P}}
  = \frac{\pi\left(D_2+D_1\eta_{+}\right)^2}{4\lambda\varphi}
    + \frac{\pi\left(D_2\cos\alpha+D_1\xi_{-}\right)^2}{4\lambda\varphi}
$ & \\[3.5mm]
& &
$\displaystyle
  r_{(\text{b})}^{\text{P2P}}
  = \frac{\pi\left(D_2+D_1\eta_{-}\right)^2}{4\lambda\varphi}
    + \frac{\pi\left(D_2\cos\alpha+D_1\xi_{+}\right)^2}{4\lambda\varphi}
$ & \\[2mm]

\specialrule{0.25pt}{0pt}{2pt}  

UPA-to-ULA (P2L) & No rotation &
$\displaystyle
  r_{\text{F,on}}^{\text{P2L}}(0)
  = \frac{\pi D_2^2}{4\lambda\varphi}
    + \frac{\pi\left(D_1+D_2\right)^2}{4\lambda\varphi}
$ &
\eqref{equ:P2L_on_no_rot}\\[2mm]

\specialrule{0.25pt}{2pt}{2pt}  

UPA-to-ULA (P2L) & Tx $x$-axis rotation $\theta$ &
$\displaystyle
  r_{\text{F,on}}^{\text{P2L}}(\theta)
  = \frac{\pi D_2^2}{4\lambda\varphi}
    + \frac{\pi\left(D_1\cos\theta+D_2\right)^2}{4\lambda\varphi}
$ &
\eqref{equ:P2L_on_rot_x}\\[1mm]

\specialrule{0.25pt}{2pt}{2pt}  

UPA-to-point (P2O) & Off-boresight &
$\displaystyle
  r_{\text{F,off}}^{\text{P2O}}(\alpha,\beta)
  = \frac{\pi D_{2}^{2}\cos^{2}\!\beta}{4\lambda\varphi}
    + \frac{\pi D_{2}^{2}\!\left(|\sin\alpha\sin\beta|+\cos\alpha\right)^{2}}{4\lambda\varphi}
$ &
\eqref{equ:UPA-to-point_off-boresight}\\[1mm]

\bottomrule  
\end{tabular}

\begin{minipage}{\textwidth}
\footnotesize
\emph{Notation:} 
For compactness, define
$\eta_{\pm}(\theta,\phi)\triangleq|\cos\phi\pm\sin\phi\sin\theta|$ and
$\xi_{\pm}(\theta,\phi,\alpha)\triangleq|\cos\theta\cos\alpha+\cos\phi\sin\theta\sin\alpha\pm\sin\alpha\sin\phi|$.
\end{minipage}
\end{table*}

\section{Numerical Results}\label{sec:results}
In this section, the key analytical expressions developed in Sec.~\ref{sec:ULA} and Sec.~\ref{sec:UPA} are numerically elaborated. By default (unless specified otherwise for the particular figure), we set the signal frequency to $f = 300$\,GHz (wavelength of $\lambda=1\,$mm). We also validate the accuracy of our closed-form expressions for $r_\mathrm{F}$ by cross-verifying the obtained results across those delivered by an in-house computer simulator developed in~\cite{zhang2025impact}. This simulation tool utilizes exhaustive search to approximate $r_\mathrm{F}$ numerically by following the definition of the near-field boundary: as the minimal separation distance $r$ at which the maximum phase mismatch between any two received signal components remains below the threshold $\varphi$. 

\rew{To quantify when neglecting rotations becomes inaccurate, we use a relative deviation metric between the near-field boundary distance predicted with and without rotations. Specifically, for a given configuration (ULA/UPA) and the corresponding closed-form expression, let $r_{\mathrm F}(\cdot)$ denote the boundary distance computed with nonzero rotation angles, and let $r_{\mathrm F}^{(0)}$ denote the baseline obtained by setting the rotation angles to zero in the same configuration. We then define the relative deviation as $\Delta_{r_{\mathrm F}}\triangleq|r_{\mathrm F}(\cdot)-r_{\mathrm F}^{(0)}|/r_{\mathrm F}^{(0)}$. Unless stated otherwise, the percentages reported in the numerical results (e.g., Figs.~\ref{fig:diff_RxSize} and~\ref{fig:nearfield_distribution_alpha}) correspond to $100\times\Delta_{r_{\mathrm F}}$.}

\begin{figure}[!t]
    \centering
    \includegraphics[width=\linewidth]{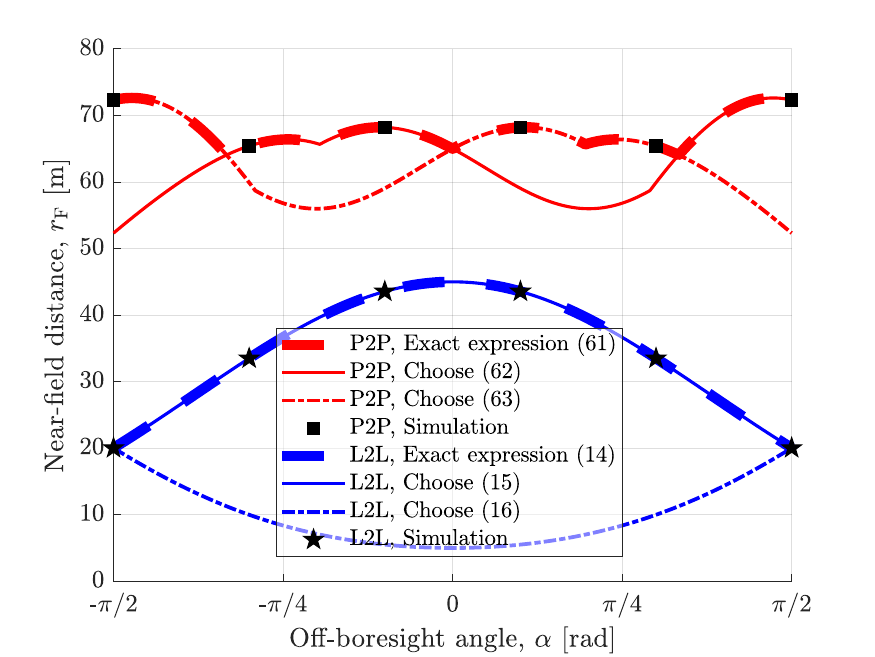}
    \caption{Comparison of the near-field distance expressions for P2P and L2L configurations as a function of the off-boresight angle $\alpha$ with $D_1 = 0.1$ m, $D_2 = 0.05$ m, $\theta'= 0$, $\phi=60^{\circ}$.}
    \label{fig:chooseExp}
\end{figure}

\subsection{Dominant expression selection}
We start by evaluating the accuracy and applicability of different closed-form expressions for the near-field distance under varying off-boresight angles. Fig.~\ref{fig:chooseExp} compares the analytical and simulated results for both P2P and L2L configurations as a function of the off-boresight angle $\alpha$, with $D_1 = 0.1$\,m, $D_2 = 0.05$\,m, $\theta' = 0$, and $\phi = 60^\circ$. As shown in the figure, the simulation results (square and star markers) closely match the theoretical expressions, which validate the tightness of the approximations. For the L2L case, the expression in~\eqref{equ:U2U_off_a} alone  accurately matches the exact result~\eqref{equ:L2L_exact} across the entire angular range. This suggests that a single dominant expression~\eqref{equ:U2U_off_a} is sufficient for this scenario. In contrast, the P2P configuration exhibits a more intricate angular dependence: neither~\eqref{equ:UPA-to-UPA_dual-angle_a} nor~\eqref{equ:UPA-to-UPA_dual-angle_b} alone can approximate the exact expression in~\eqref{equ:A2A_offboresight} accurately, \textbf{so both branches of the piecewise formulation are necessary for P2P scenarios.}

\begin{figure}[!t]
    \centering
    \includegraphics[width=\linewidth]{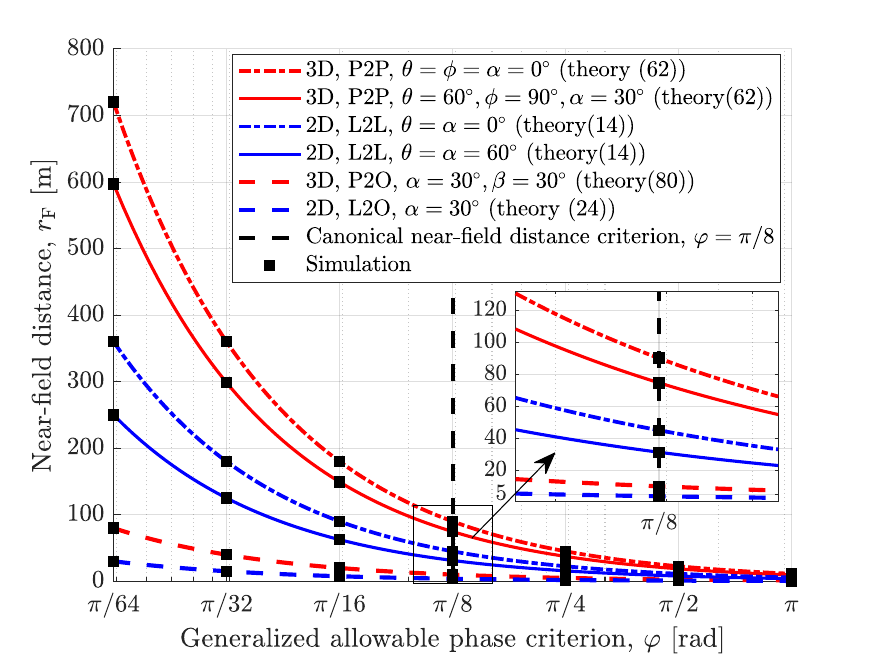}
    \caption{Near-field distance $r_{\text{F}}$ versus generalized allowable phase criterion $\varphi$ with $D_1=0.1\text{ m}$ and $D_2=0.05\text{ m}$.}
    \label{fig:sim_extended_criterion}
\end{figure}

\begin{figure*}[!t]
\centering
\subfloat[Fixed AP Size $D_1 = 0.1$ m]{\includegraphics[width=0.35\textwidth]{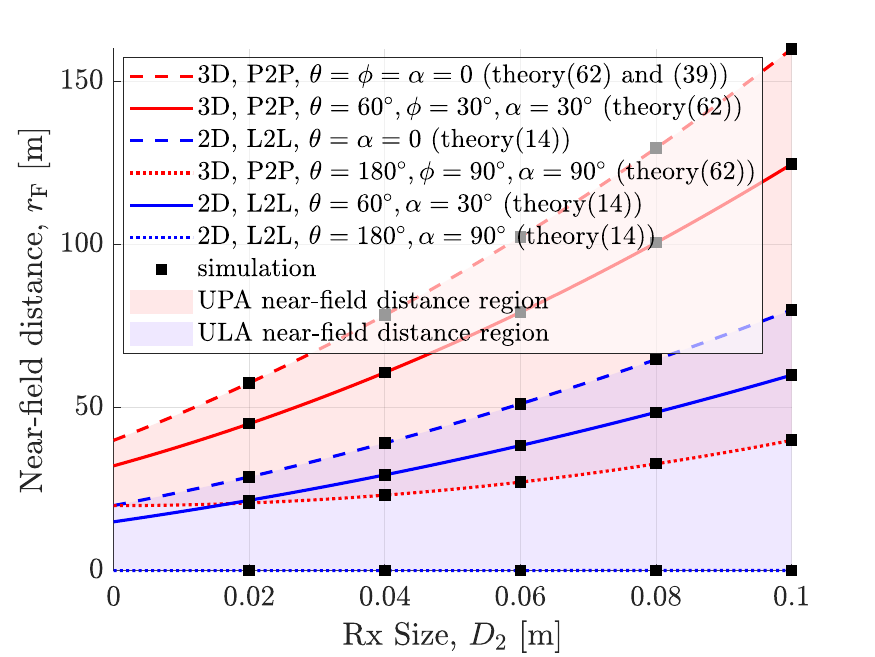}%
\label{fig:ula_upa_rx_size}}
\hspace{-0.6cm}
\subfloat[Near-field distance contours over $D_1$ and $D_2$]{\includegraphics[width=0.35\textwidth]{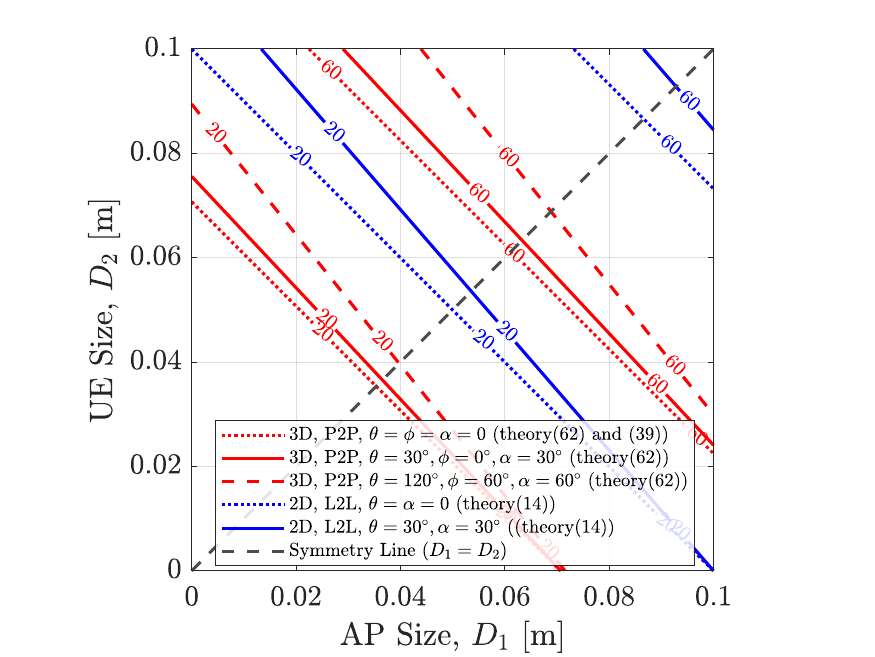}%
\label{fig:upa_rx_size}}
\hspace{-0.6cm}
\subfloat[Diagonal-matched comparison (fixed $D_2=0.1$ m)]{\includegraphics[width=0.35\textwidth]{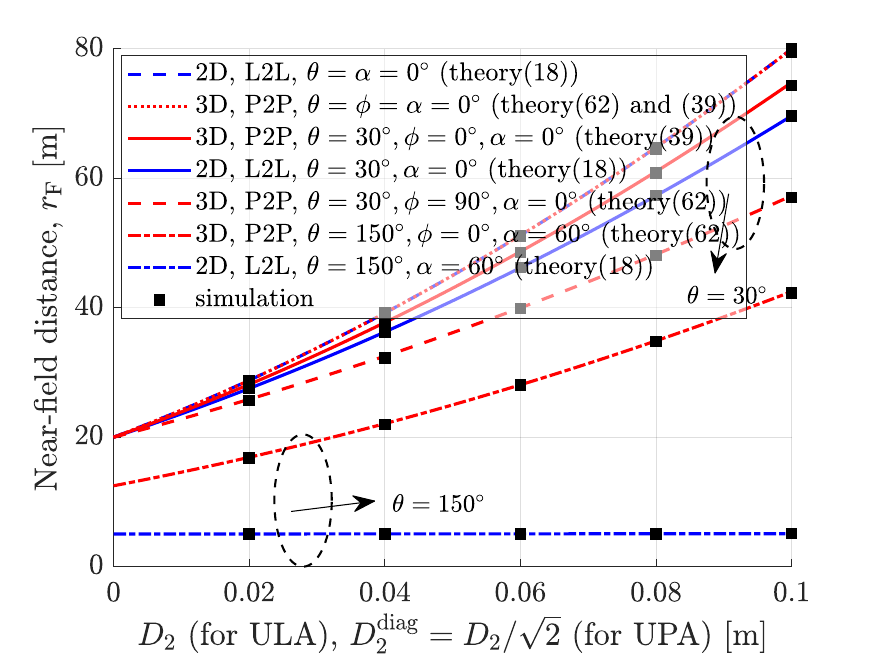}%
\label{fig:tx_size_comparison}}
\caption{Comprehensive comparison of near-field distance characteristics for different antenna array configurations and size variations.}
\label{fig:nearfield_comprehensive_comparison}
\end{figure*}

\subsection{Effect of general allowable phase criterion}
We now examine how the generalized phase mismatch threshold $\varphi$ influences the near-field distance $r_{\mathrm{F}}$. Fig.~\ref{fig:sim_extended_criterion} presents analytical and simulated results for various 2D and 3D configurations, including both P2P and L2L setups, as $\varphi$ varies from $\pi/64$ to $\pi$. The array apertures are fixed at $D_1=0.1$\,m and $D_2=0.05$\,m. The simulation results (black square markers) closely follow the theoretical predictions for all configurations, which validates the correctness of the generalized phase error criterion. As expected, smaller values of $\varphi$ lead to larger near-field distances $r_{\mathrm{F}}$, as a more stringent phase-error budget requires greater propagation distances to satisfy far-field conditions. Moreover, 3D P2P configurations yield notably larger near-field distances than their 2D counterparts for the same $\varphi$ due to the inherently larger aperture of the UPA. The difference becomes particularly visible under stringent phase thresholds (e.g., $\varphi \leq \pi/8$). \rew{Overall, introducing $\varphi$ provides a tunable generalization of the canonical $\pi/8$ Fraunhofer rule and offers a unified reference for different phase-accuracy budgets. Since the remainder of this section focuses on the impact of rotations/misalignment and aims at direct comparison with conventional Fraunhofer-type baselines, we use $\varphi=\pi/8$ as the default setting unless stated otherwise. } \rew{Operationally, under a prescribed phase-accuracy budget (e.g., $\varphi=\pi/8$), the resulting Fraunhofer-type boundary can be interpreted as the distance beyond which conventional range-independent far-field beamsteering becomes gain-admissible, whereas below it one may need range-dependent near-field focusing to avoid antenna gain loss~\cite{bjornson2021primer}.}

\subsection{Effect of AP and UE size}
We next investigate how the physical aperture sizes of the AP and UE affect the near-field distance $r_{\mathrm{F}}$. Fig.~\ref{fig:ula_upa_rx_size}--\ref{fig:tx_size_comparison} illustrate the impact of varying the array dimensions under different configurations and angular settings. Specifically, Fig.~\ref{fig:ula_upa_rx_size} compares ULA and UPA with the AP aperture fixed at $D_1=0.1$ m. In general, UPA yields larger $r_{\mathrm{F}}$ due to its angular sensitivity in both azimuth and elevation. However, for certain angular settings, the UPA near-field distance can fall below that of ULA, which reflects its additional degrees of freedom in misalignment. Moreover, unlike ULA, the UPA maintains a nonzero near-field distance even when $D_2 \to 0$ due to the planar structure.

To jointly examine the effect of both AP and UE apertures, Fig.~\ref{fig:upa_rx_size} shows contour plots of equal $r_{\mathrm{F}}$ in the $(D_1,D_2)$ domain. Without rotation, the contours are symmetric with respect to the line $D_1=D_2$, which indicates that the two dimensions are interchangeable. When rotation is introduced, this symmetry is broken and the two dimensions become non-interchangeable. As $r_{\mathrm{F}}$ increases, the ULA gradually exhibits weaker asymmetry, whereas the UPA continues to show strong asymmetry due to its additional angular degrees of freedom.

Fig.~\ref{fig:tx_size_comparison} next examines the effect of increasing the AP-side aperture with $D_2=0.1$\,m fixed. To enable an alternative comparison, we adopt a diagonal-matched baseline with $D_1^{\mathrm{UPA}} = {D_1}/{\sqrt{2}}$, $D_2^{\mathrm{UPA}} = {D_2}/{\sqrt{2}}$ so that the UPA diagonal length equals the ULA aperture length. Under this condition, ULA and UPA yield identical $r_{\mathrm{F}}$ without rotation. When a single rotation angle $\theta$ is applied, UPA achieves a larger near-field distance than ULA, consistent with the analytical proof in Appendix~\ref{appendix:UPA-ULA-comparison}. In contrast, under multi-angle settings (e.g., $\theta=30^\circ$ for both arrays while $\phi=90^\circ$ for UPA), UPA may instead yield smaller $r_{\mathrm{F}}$, which highlights the fundamental role of array geometry in determining $r_{\mathrm{F}}$. Across all configurations, the simulation results closely match the theory derived. For clarity, simulation points are omitted in the subsequent figures.

\begin{figure}[!t]
    \centering
    \includegraphics[width=\linewidth]{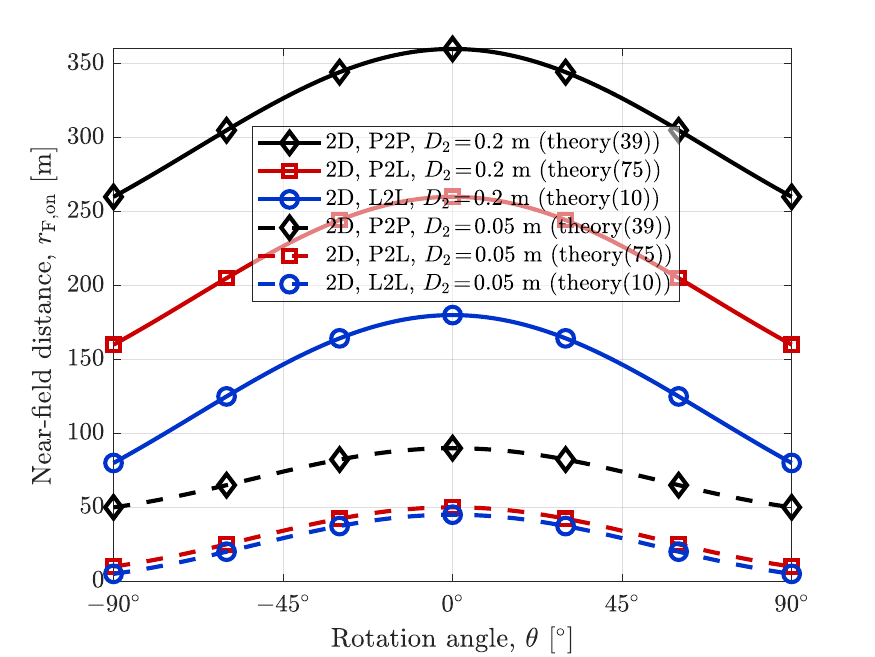}
    \caption{\rew{Comparison of on-boresight near-field boundary distances versus the rotation angle $\theta$ for L2L, P2L, and P2P, shown for two Rx aperture sizes $D_2\in\{0.05,0.2\}$~m.}}
    \label{fig:mixed_compare_D2}
\end{figure}

\rew{Fig.~\ref{fig:mixed_compare_D2} further compares the on-boresight near-field boundary $r_{\mathrm{F,on}}$ versus the rotation angle $\theta$ across three link types, namely L2L, the mixed P2L case, and P2P, while also varying the Rx-side aperture size $D_2$ (solid: $D_2=0.2$~m; dashed: $D_2=0.05$~m). Note that, for both $D_2$ values, the ordering $r_{\mathrm{F,on}}^{\mathrm{L2L}}(\theta) < r_{\mathrm{F,on}}^{\mathrm{P2L}}(\theta) < r_{\mathrm{F,on}}^{\mathrm{P2P}}(\theta)$ holds for all $\theta$, which confirms that introducing an additional transverse aperture dimension (UPA) systematically enlarges the near-field region relative to a purely 1D pairing.
Moreover, increasing $D_2$ enlarges all three boundaries, but the increase is more pronounced for P2L and P2P due to the extra $D_2^2$-type contribution brought by the planar aperture dimension.
Finally, all curves exhibit a symmetric decrease as $|\theta|$ increases, which reflects the projected-aperture reduction through the $\cos\theta$ factor.}

\begin{figure}[!t]
    \centering
    \includegraphics[width=\linewidth]{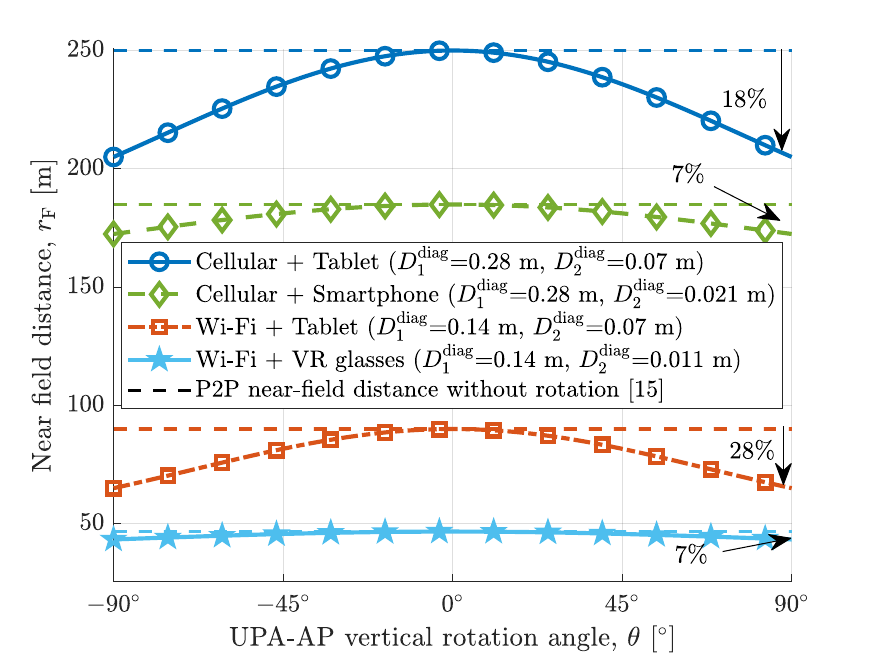}
    \caption{Near-field distance of P2P scenario in \eqref{equ:A2A_offboresight} versus UPA-AP vertical rotation angle $\theta$ with $\alpha=\phi=0^{\circ}$.}
    \label{fig:diff_RxSize}
\end{figure}

To further explore the effect of device dimensions and rotations, Fig.~\ref{fig:diff_RxSize} plots $r_{\mathrm{F}}$versus rotation angle $\theta$ for several realistic AP–UE combinations: large cellular ($D_1 = 0.2$\,m) and smaller Wi-Fi APs ($D_1 = 0.1$\,m); and UEs modeled as tablets ($D_2 = 0.05$\,m), smartphones ($D_2 = 0.015$\,m), and VR glasses ($D_2 = 0.008$\,m). The results show that $r_{\mathrm{F}}$ is symmetric with respect to $\theta = 0^\circ$ and generally decreases with increased rotation due to reduced effective aperture. Furthermore, larger UE arrays exhibit stronger variations in near-field distance. For example, under a cellular AP, the near-field distance changes by up to $18\%$ for a tablet, but only $7\%$ for a smartphone. Interestingly, for the same UE (tablet), a smaller AP (Wi-Fi) leads to a larger variation ($28\%$) than a larger one (cellular, $18\%$), due to the latter’s finer pre-steering phase capability.

\begin{figure*}[!t]
    \centering
    \begin{subfigure}[t]{0.51\linewidth}
        \centering
        \includegraphics[width=\linewidth]{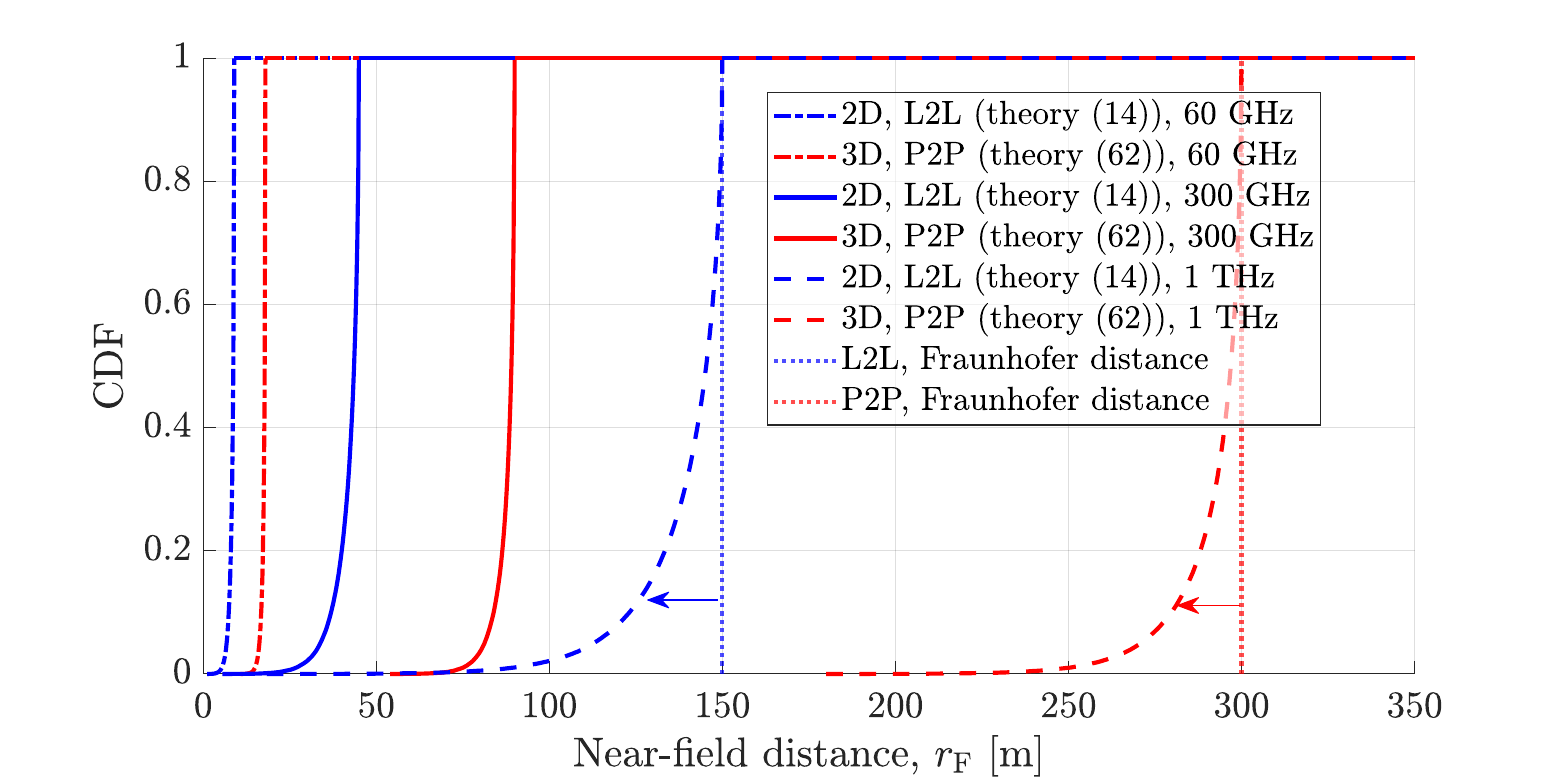}
        \caption{CDF of the near-field distance $r_\mathrm{F}$.}
        \label{fig:CDF_r_F}
    \end{subfigure}
    \hspace{-9mm}
    \begin{subfigure}[t]{0.51\linewidth}
        \centering
        \includegraphics[width=\linewidth]{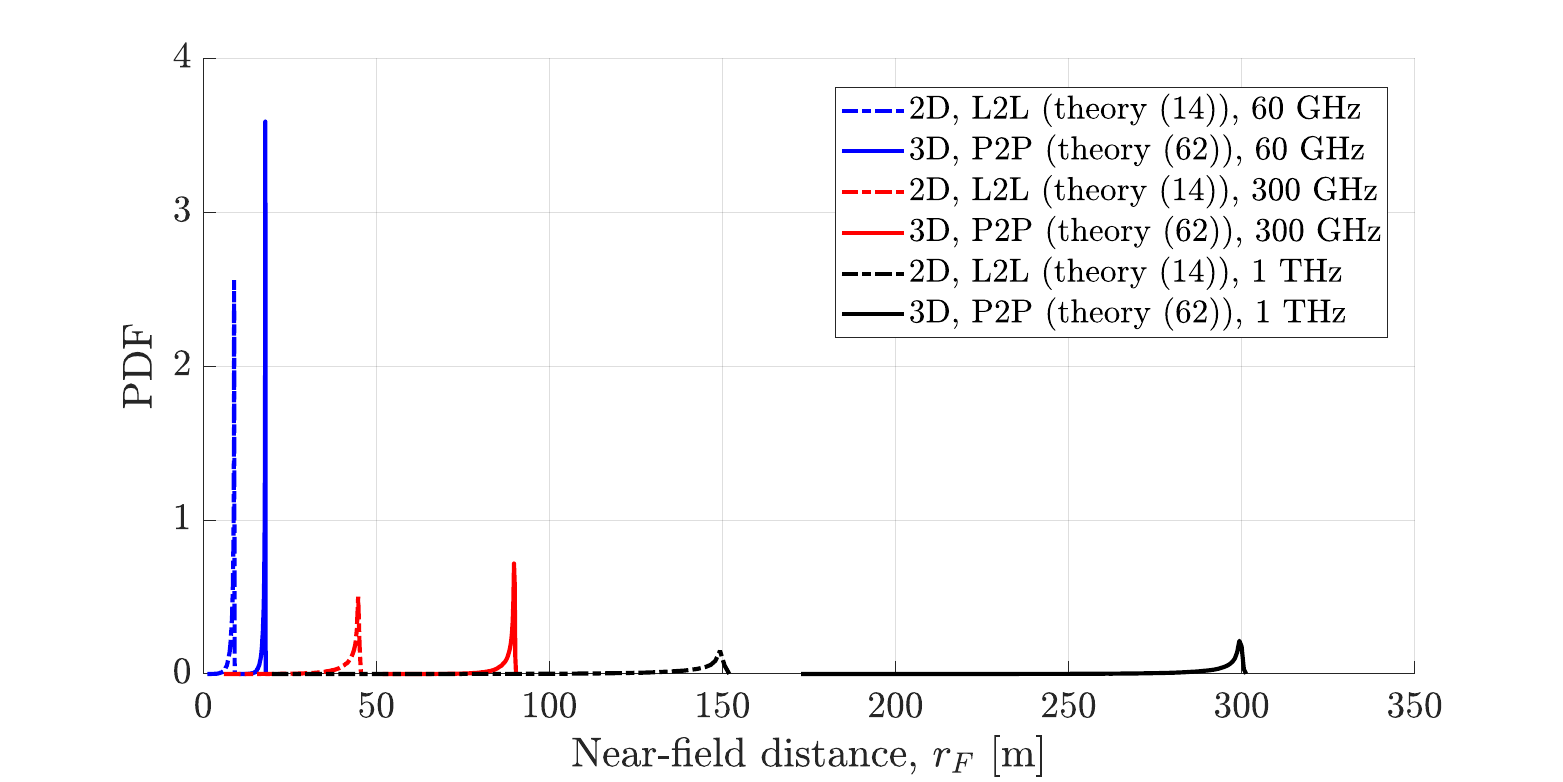}
        \caption{PDF of the near-field distance $r_\mathrm{F}$.}
        \label{fig:PDF_r_F}
    \end{subfigure}
    
    \caption{Statistical characterization of the near-field distance $r_\mathrm{F}$ with $D_1 = 0.1$\,m and $D_2 = 0.05$\,m.}
    \label{fig:CDF_PDF_r_F}
\end{figure*}

\subsection{Effect of misalignment}
\rew{To investigate the statistical characteristics of the near-field distance under angular uncertainty, we model random variations in $\theta$, $\phi$, and $\theta'$ using the truncated von Mises (TvM) distribution~\cite{glazunov2018spherical}. This choice is motivated by the fact that misalignment angles are circular random variables, for which the von Mises distribution provides a natural ``circular analogue'' of the Gaussian distribution with a single concentration parameter controlling the spread, and the truncation matches the bounded angle ranges in our system model. Similar von Mises-type models have also been adopted in related mmWave literature to characterize angular uncertainty and device-orientation statistics~\cite{glazunov2018spherical, nazari2023mmwave}.} Unless stated otherwise, we set the concentration parameter to $\kappa=10$ and the mean direction to $\mu=0$ (on-boresight $\alpha=0^\circ$). Fig.~\ref{fig:CDF_r_F} and Fig.~\ref{fig:PDF_r_F} respectively show the CDF and PDF of $r_{\mathrm{F}}$ for both 2D L2L and 3D P2P configurations with $\alpha = 0^\circ$. As shown in the CDF curves (Fig.~\ref{fig:CDF_r_F}), both 2D and 3D configurations exhibit sharply concentrated distributions around their nominal near-field distance values. However, the 3D P2P scenario displays significantly larger average values, especially at 1\,THz. \rew{For reference, Fig.~\ref{fig:CDF_r_F} also includes two vertical lines indicating the conventional Fraunhofer distances at $1$\,THz. The arrows highlight the resulting mismatch: under random rotations, the distribution of $r_{\mathrm F}$ can be noticeably shifted away from this no-rotation baseline.} Fig.~\ref{fig:PDF_r_F} provides the corresponding PDF view, where 3D P2P exhibits right-shifted distributions compared to 2D L2L. These results indicate that random misalignment can introduce non-negligible variability in $r_{\mathrm F}$ and can also shift its typical value away from the conventional no-rotation Fraunhofer baseline.

\begin{figure*}[!t]
\centering
\subfloat[$\alpha = 0^{\circ}$]{\includegraphics[width=0.34\textwidth]{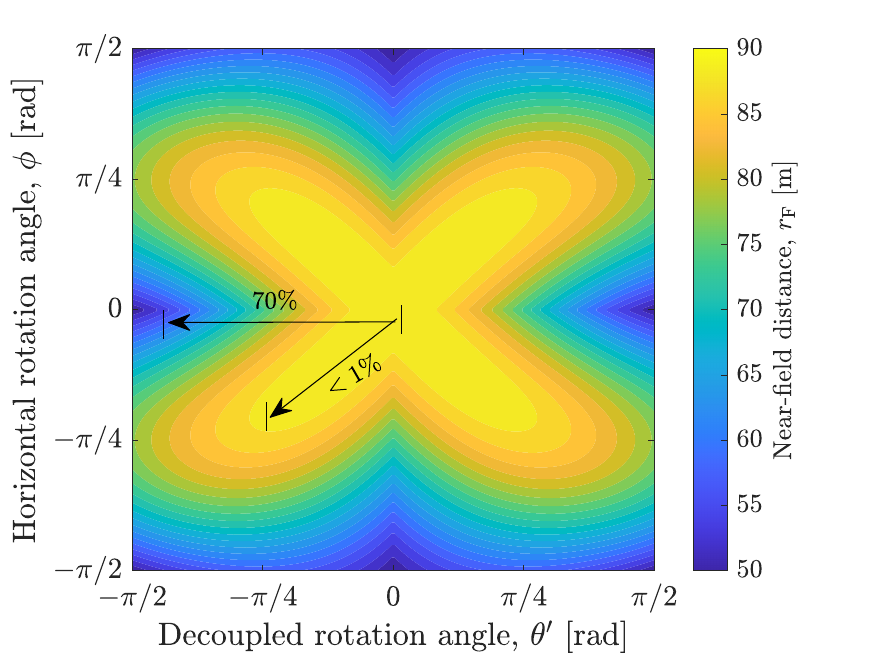}%
\label{fig:nearfield_beta_0}}
\hspace{-0.3cm}
\subfloat[$\alpha = 30^{\circ}$]{\includegraphics[width=0.34\textwidth]{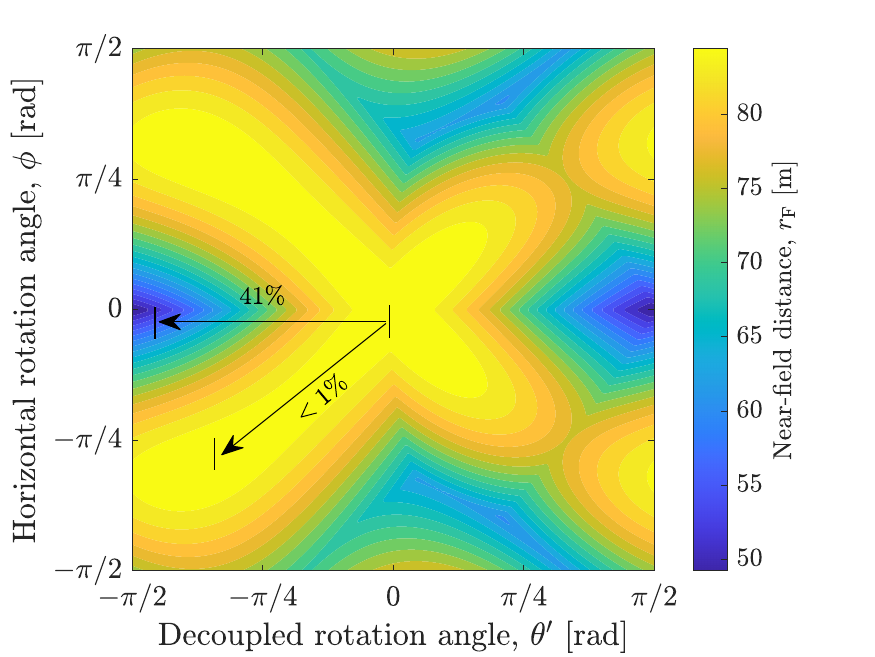}%
\label{fig:nearfield_beta_30}}
\hspace{-0.3cm}
\subfloat[$\alpha = 60^{\circ}$]{\includegraphics[width=0.34\textwidth]{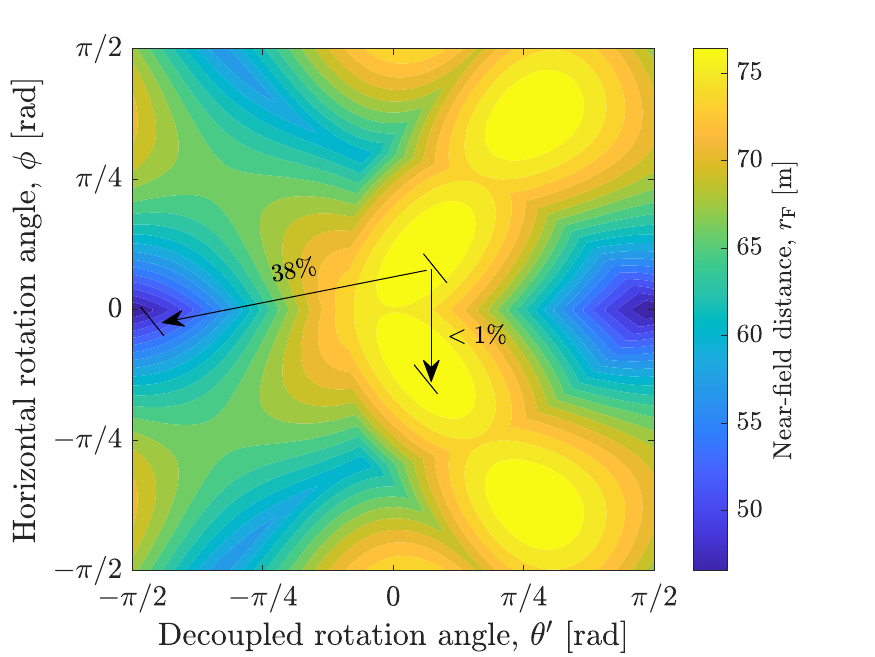}%
\label{fig:nearfield_beta_90}}
\caption{Near-field distance distribution $r_{\mathrm{F}}$ in the $(\theta', \phi)$ parameter space for different off-boresight angles $\alpha$.}
\label{fig:nearfield_distribution_alpha}
\end{figure*}

To further understand how joint azimuth and elevation misalignment affects $r_{\mathrm{F}}$, Fig.~\ref{fig:nearfield_distribution_alpha} presents $r_{\mathrm{F}}$ over the $(\theta',\phi)$ rotation space for $\alpha = 0^\circ$, $30^\circ$, and $90^\circ$, under the 3D P2P configuration. Under on-boresight conditions ($\alpha = 0^\circ$) in Fig.~\ref{fig:nearfield_beta_0}, $r_{\mathrm{F}}$ exhibits a strong dependence on the horizontal rotation angle $\phi$. \rew{In addition to the overall shape, we annotate in Fig.~\ref{fig:nearfield_distribution_alpha} illustrative relative deviations with respect to the no-rotation baseline, defined as $100\times\Delta_{r_{\mathrm F}}$, where $\Delta_{r_{\mathrm F}}\triangleq |r_{\mathrm F}(\cdot)-r_{\mathrm F}^{(0)}|/r_{\mathrm F}^{(0)}$ is defined above. Specifically, the percentages (e.g., $70\%$, $41\%$, and $38\%$) correspond to the relative deviation evaluated at the representative locations indicated by the arrows in each subfigure, which illustrates how ignoring rotations can lead to sizable discrepancies depending on $(\theta',\phi)$ and $\alpha$.} \textbf{These results highlight the intricate dependence of $r_{\mathrm{F}}$ on joint rotations, including asymmetry and angle-coupling effects that are not captured by conventional no-rotation baselines, which motivates the use of the proposed misalignment-aware expressions for accurate near-field characterization under general misalignment conditions.}

\section{Conclusion}\label{sec:conclusion}
This paper presents a generalized framework for characterizing the radiative near-field distance in mmWave and THz communications under antenna array misalignment. By accounting for various practical rotations of the Tx and Rx antenna arrays, as well as extending the classical phase-deviation criterion to a tunable threshold $\varphi \in (0,\pi]$, we derived \emph{closed-form exact and approximate expressions for a wide range of possible configurations}, including both array-to-array (ULA-to-ULA and UPA-to-UPA) setups and array-to-point special cases. The proposed expressions provide a more flexible and accurate representation of the near-field region boundaries in realistic mmWave/THz communication scenarios.

\rew{Our numerical results further quantify the impact of misalignment on the near-field boundary by comparing the misalignment-aware expressions with the corresponding no-rotation baselines. In the considered mmWave/THz settings and rotation ranges, ignoring rotations can lead to sizable deviations in near-field distance (e.g., up to $70\%$ in Fig.~\ref{fig:nearfield_distribution_alpha} and around $7\%$--$28\%$ for representative device-size combinations in Fig.~\ref{fig:diff_RxSize}), which may translate into shifts of several tens of~meters.} 
\rew{From an operational viewpoint, such shifts directly affect the distance beyond which conventional far-field beamsteering can be used with negligible gain loss, and where more complex range-dependent near-field beamfocusing/codebooks would otherwise be invoked.}

\rew{Consequently, the derived closed-form boundaries are directly useful for system evaluation and design: they enable identifying rotation-admissible regimes where simple distance-independent beamsteering is sufficient, as well as regimes where near-field handling is genuinely needed.} Hence, a \emph{careful consideration of spatial misalignment and array geometry (e.g., via the models developed in this article) becomes essential} for accurate evaluation and better design of forthcoming mmWave/THz links that operate across near-, far-, and cross-field regimes beyond~6G.

\appendices

\section{Proof of Proposition~\ref{prop:dominant_a}}
\label{appendix:proof_dominant_a}
Using the identity $|a - b| - |a + b| = -2 \cdot \operatorname{sgn}(ab) \cdot \min(|a|, |b|)$, we obtain
\begin{equation}
\begin{split}
    & r_{(\text{a})}^{\text{L2L}}(\theta,\alpha) -  r_{(\text{b})}^{\text{L2L}}(\theta,\alpha) = \frac{\pi D_1 D_2 \cos \theta' \cos \alpha}{\lambda \varphi} 
    \\&- \operatorname{sgn}(\sin \theta'\sin \alpha)\min( D_1|\sin \theta'|, D_2|\sin \alpha| ).
\end{split}
\end{equation}

When $\theta'$ and $\alpha$ have opposite signs (i.e., $\theta'\alpha<0$), such as $\theta' \in (-\pi/2, 0)$ and $\alpha \in (0, \pi/2)$ or $\theta' \in (0, \pi/2)$ and $\alpha \in (-\pi/2, 0)$, it holds that $\cos \theta' \cos \alpha > 0$ and $\sin \theta' \sin \alpha < 0$. Consequently, $\operatorname{sgn}(\sin \theta'  \sin \alpha) = -1$, and the difference becomes strictly positive $r_{\alpha ,(\mathrm{a})}^{\mathrm{U}2\mathrm{U}} > r_{\alpha ,(\mathrm{b})}^{\mathrm{U}2\mathrm{U}}$. Therefore, case~(a) always dominates when $\theta'$ and $\alpha$ have opposite signs.

Next, we consider the case where $\theta'$ and $\alpha$ have the same sign (i.e., $\theta'\alpha>0$), which implies $\operatorname{sgn}(\sin \theta' \sin \alpha) = 1$. In this case, a sufficient condition for 
$ r_{(\text{a})}^{\text{L2L}}(\theta,\alpha)>  r_{(\text{b})}^{\text{L2L}}(\theta,\alpha)$ is
\begin{equation}
    {\pi D_1 D_2 \cos \theta' \cos \alpha} > \lambda \varphi\min\left( D_1|\sin \theta'|,\, D_2|\sin \alpha| \right).
\end{equation}

To obtain a closed-form bound, we consider the symmetric case where $\theta' = \alpha = \vartheta$. In this case, the inequality reduces to ${\max(D_1, D_2) \pi \cos^2 \vartheta} > \lambda \varphi\sin \vartheta.$ Letting $\kappa= {\max(D_1, D_2)\pi}/{(\lambda \varphi)}$, we rearrange the inequality as the quadratic function $\kappa \sin^2 \vartheta + \sin \vartheta - \kappa < 0.$ By setting the left-hand side equal to zero, we obtain the threshold angle
\begin{equation}
    \vartheta_{\mathrm{th}} = \sin^{-1} \left( \frac{-1 + \sqrt{1 + 4\kappa^2}}{2\kappa} \right),
\end{equation}
below which case~(a) dominates. This concludes the proof.

\section{Comparison between UPA and ULA Expressions}
\label{appendix:UPA-ULA-comparison}
For a deeper comparison between UPA and ULA configurations, we consider a \emph{diagonal-matched} setting where the UPA aperture edge lengths are scaled such that its diagonal equals the ULA aperture length. Specifically, we set $D_1^{\mathrm{UPA}} = {D_1}/{\sqrt{2}}$ and $D_2^{\mathrm{UPA}} = {D_2}/{\sqrt{2}}$. Substituting these into~\eqref{equ:UPA-to-UPA_off-boresight_0} yields
\begin{equation}
    r_{\mathrm{F,off}}^{\mathrm{P2P}} = \frac{\pi (D_1 + D_2)^2}{8 \lambda \varphi} 
    + \frac{\pi (D_1 \cos(\theta - \alpha) + D_2 \cos \alpha)^2}{8 \lambda \varphi}.
    \label{equ:UPA_diag_matched}
\end{equation}

We compute the difference between~\eqref{equ:UPA_diag_matched} and~\eqref{equ:ULA-to-ULA_off_boresight} as
\begin{equation}
    \Delta r_{\mathrm{F,off}}= \frac{\pi \left( D_1\sin \left( \theta -\alpha \right) +D_2\sin \alpha \right) ^2}{8\lambda \varphi}+\Gamma \left( \theta ,\alpha \right),
\end{equation}
where $\Gamma \left( \theta ,\alpha \right) \triangleq {\pi D_1D_2\left( 1-\cos \left( 2\alpha -\theta \right) \right)}/({4\lambda \varphi}).$

Since $1-\cos(2\alpha-\theta) \geq 0$ for all $\theta,\alpha$, both terms in $\Delta r_{\mathrm{F,off}}$ are nonnegative. Hence, $\Delta r_{\mathrm{F,off}} \geq 0$ in general, which indicates that the UPA always yields a larger near-field distance than the ULA. Physically, this nonnegative gap originates from the UPA’s two-dimensional aperture structure, which introduces an additional spatial degree of freedom for phase accumulation.  Even after diagonal matching, the orthogonal aperture dimension contributes extra geometric phase errors that enlarge the near-field boundary compared to the purely one-dimensional ULA case.

\bibliographystyle{IEEEtran}
\bibliography{references}

\begin{IEEEbiography}[{\includegraphics[width=1in,height=1.25in,clip,keepaspectratio]{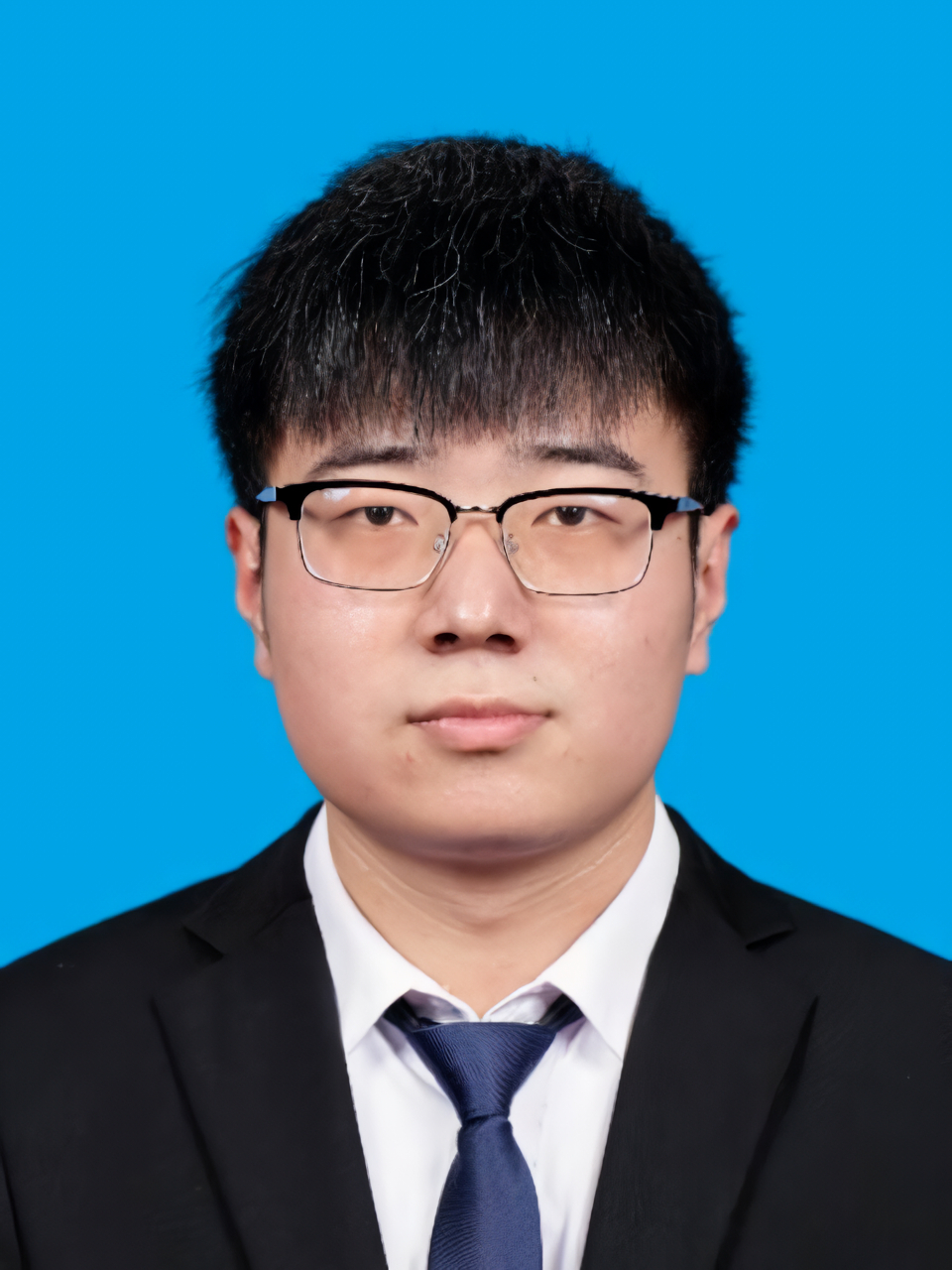}}]{Peng Zhang}
received the B.S. degree in communication engineering from Beijing Jiaotong University, Beijing, China, in 2021, and the M.S. degree in information and communication engineering from the same university in 2024. He is currently pursuing the Ph.D. degree in information and communications technology with KTH Royal Institute of Technology, Stockholm, Sweden. His research interests include near-field wireless communications, sub-terahertz (sub-THz) and THz systems.
\end{IEEEbiography}

\begin{IEEEbiography}[{\includegraphics[width=1in,height=1.25in,clip,keepaspectratio]{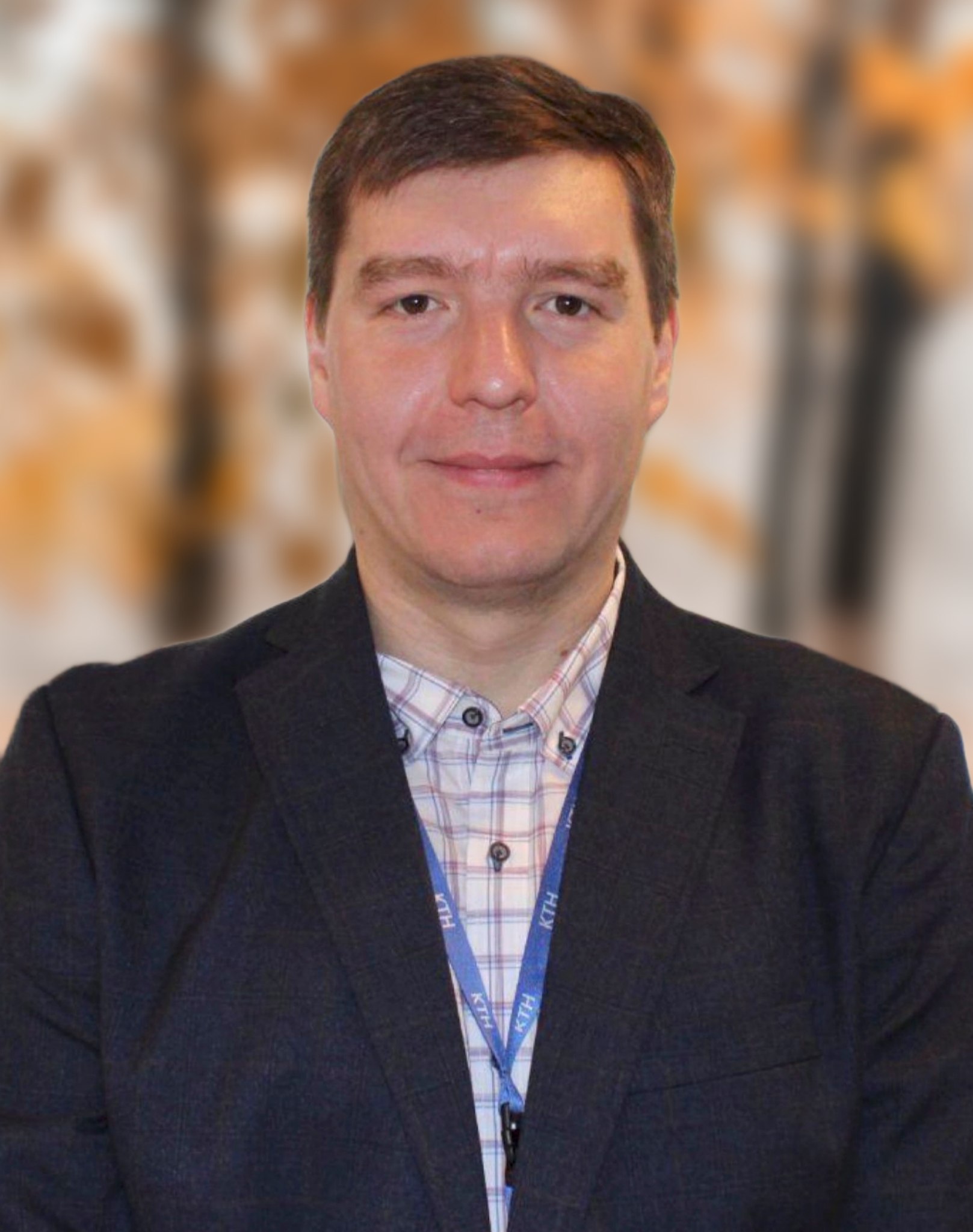}}]{Vitaly Petrov (M'20)}
is an Assistant Professor with the Department of Communication Systems, KTH Royal Institute of Technology, Stockholm, Sweden. Prior to joining KTH in 2024, he was a Principal Research Scientist with Northeastern University, Boston, MA, USA, from 2022 to 2024, and a Senior Standardization Specialist and 3GPP RAN1 Delegate with Nokia Bell Labs and later Nokia Standards from 2020 to 2022. He received the Ph.D. degree in communications engineering from Tampere University, Finland, in 2020. He has also been a Visiting Researcher with The University of Texas at Austin, Georgia Institute of Technology, and King's College London. His current research interests include terahertz-band communications and networking.
\end{IEEEbiography}

\begin{IEEEbiography}[{\includegraphics[width=1in,height=1.25in,clip,keepaspectratio]{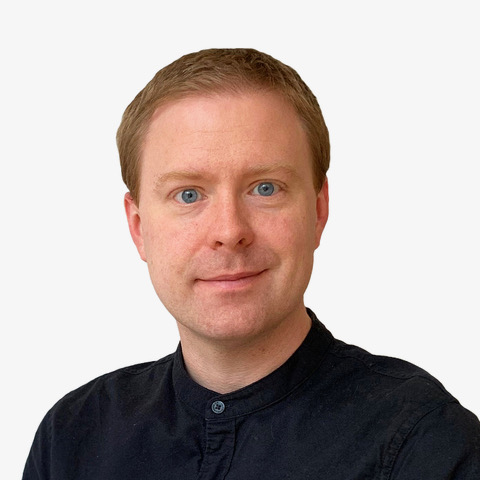}}]{Emil Bj\"ornson (Fellow, IEEE)} received the M.S. degree in engineering mathematics from Lund University, Sweden, in 2007, and the Ph.D. degree in telecommunications from the KTH Royal Institute of Technology, Sweden, in 2011.

From 2012 to 2014, he was a Post-Doctoral Researcher with the Alcatel-Lucent Chair on Flexible Radio, SUPELEC, France. From 2014 to 2021, he held different professor positions at Link\"oping University, Sweden. He has been a Full Professor of Wireless Communication at KTH since 2020 and currently leads the Department of Communication Systems. He has authored the textbooks \emph{Optimal Resource Allocation in Coordinated Multi-Cell Systems} (2013), \emph{Massive MIMO Networks: Spectral, Energy, and Hardware Efficiency} (2017), \emph{Foundations of User-Centric Cell-Free Massive MIMO} (2021), and \emph{Introduction to Multiple Antenna Communications and Reconfigurable Surfaces} (2024). He is dedicated to reproducible research and has published much simulation code. He researches multi-antenna communications, reconfigurable intelligent surfaces, radio resource allocation, machine learning for communications, and energy efficiency.

Dr. Bj\"ornson has performed MIMO research since 2006. His papers have received more than 40000 citations, he has filed more than 30 patent applications, and he is recognized as a Clarivate Highly Cited Researcher. He co-hosts the podcast Wireless Future and has a popular YouTube channel with the same name. He is a Wallenberg Academy Fellow, a Digital Futures Fellow, and an SSF Future Research Leader. He has received the 2014 Outstanding Young Researcher Award from IEEE ComSoc EMEA, the 2015 Ingvar Carlsson Award, the 2016 Best Ph.D. Award from EURASIP, the 2018 and 2022 IEEE Marconi Prize Paper Awards in Wireless Communications, the 2019 EURASIP Early Career Award, the 2019 IEEE ComSoc Fred W. Ellersick Prize, the 2019 IEEE Signal Processing Magazine Best Column Award, the 2020 Pierre-Simon Laplace Early Career Technical Achievement Award, the 2020 CTTC Early Achievement Award, the 2021 IEEE ComSoc RCC Early Achievement Award, the 2023 IEEE ComSoc Outstanding Paper Award, and the 2024 IEEE ComSoc Stephen O. Rice Prize. He also coauthored papers that received best paper awards at six international conferences.
\end{IEEEbiography}

\end{document}